\newcommand{\E}{\mathbb{E}}
\newcommand{\prob}{\mathbb{P}}
\DeclarePairedDelimiter\ceil{\lceil}{\rceil}
\newtheorem{proposition}{Proposition}
\newtheorem{lemma}{Lemma}
\newtheorem{theorem}{Theorem}
\newtheorem{corollary}{Corollary}
\newtheorem{definition}{Definition}
\title{Online Bin Packing with Known $T$}
\author{}
\date{}
\author{Shang Liu \and Xiaocheng Li 
\footnote{The authors thank Varun Gupta for bringing the problem to our attention.}}
\date{\small 
Imperial College Business School, Imperial College London\\
$\{$s.liu21, xiaocheng.li$\}$@imperial.ac.uk\\
}
\begin{document}
\maketitle

\onehalfspacing

\begin{abstract}
In the online bin packing problem, a sequence of items is revealed one at a time, and each item must be packed into an available bin instantly upon its arrival. In this paper, we revisit the problem under a setting where the total number of items $T$ is known in advance, also known as the closed online bin packing problem. Specifically, we study both the stochastic model where the item sizes are drawn independently from an unknown distribution and the random permutation model where the item sizes may be adversarially chosen, but the items arrive in a randomly permuted order. We develop and analyze an adaptive algorithm that solves an offline bin packing problem at geometric time intervals and uses the offline optimal solution to guide the online packing decisions. Under both models, we show that the algorithm achieves $C \sqrt{T}$ regret (in terms of the number of used bins) compared to the hindsight optimal solution, where $C$ is a universal constant ($\le 13$) that bears no dependence on the underlying distribution or the item sizes. The result shows the lower bound barrier of $\Omega(\sqrt{T \log T})$ in \citep{shor1986average} can be broken with solely the knowledge of the horizon $T$, but without a need of knowing the underlying distribution. As to the algorithm analysis, we develop a new approach to analyzing the packing dynamic using the notion of \textit{exchangeable 
random variables}. The approach creates a symmetrization between the offline solution and the online solution, and it is used to analyze both the algorithm performance and the various benchmarks related to the bin packing problem. For the latter one, our analysis provides an alternative (probably simpler) treatment and tightens the analysis of the asymptotic benchmark of the stochastic bin packing problem in \citep{rhee1989optimal2, rhee1989optimal3}. As the analysis only relies on a symmetry between the offline and online problems, the algorithm and benchmark analyses can be naturally extended from the stochastic model to the random permutation model.
\end{abstract}

\section{Introduction}

The bin packing problem is a fundamental optimization problem that concerns the optimal packing of a collection of items into a finite number of bins such that the number of bins used is minimized. It arises in many applications such as remnant scheduling/cutting stock \citep{adelman1999price}, resource allocation \citep{cohen2019overcommitment}, transportation logistics \citep{gupta2020interior},  etc. The \textit{online bin packing} problem considers an online setting where the items arrive sequentially, and each item must be packed instantly upon its arrival. The problem is commonly studied under two settings: average case and worst case. The average case always assumes the item sizes are sampled independently and identically from some distribution, while the worst case allows the item sizes and their arriving orders to be adversarially chosen. In this paper, we focus on the average case and consider two specific models for item size sequence: the stochastic model and the random permutation model. Under the stochastic model, the item sizes are sampled independently from an unknown distribution, and we do not impose any further assumption on the distribution. Under the random permutation model, the item sizes can be chosen adversarially, but the items arrive in a randomly permuted order. The performance of an online algorithm is measured by \textit{regret}, formally defined by the gap between the number of bins used by the online algorithm, and the optimal number of bins used by an offline algorithm that knows all the item sizes in advance. 

\begin{table}[ht!]
\centering
\small
\renewcommand{\arraystretch}{1.5}
\begin{tabular}{c|c|c|c}
\toprule
     & Regret  & Distr. & Algorithm \& Remarks  \\      \hline
    \cite{shor1986average} & $\Omega(\sqrt{T \log T})$ & Unif$[0,1]$ & Lower bound \\      \hline
        \cite{shor1986average, asgeirsson2002closed} & $\Theta(\sqrt{T})$ & Unif$[0,1]$ &   Best Fit; Known $T$    \\     \hline
    \cite{shor1991pack} & $O(\sqrt{T\log T})$ & Unif$[0,1]$ &  Best Fit   \\ 
                \hline
\cite{rhee1993line2,rhee1993line1}  & $K\sqrt{T}\log^{3/4} T$ & General & Double-overflow; unspecified constant $K$ \\      \hline
 \cite{csirik2006sum} & $B\sqrt{T}$ & Int. supp. & Sum-of-squares; bin size $B$ \\      \hline
 \cite{gupta2020interior} & $B\sqrt{T}$ & Int. supp. &  Lagrangian-based; bin size $B$ \\\hline
 \cite{banerjee2020uniform} & M & Int. supp. & Re-solving; Known $T$; problem-dependent $M$ \\ \hline
Ours & $C\sqrt{T}$ & General  & Adaptive; Known $T$; $C\le 11$\\\hline
Ours & $C\sqrt{T}$ & Ran. Perm. & Adaptive; Known $T$; $C\le 13$\\
 \bottomrule
    \end{tabular}
    \caption{Summary of existing results on average-case online bin packing}
    \label{tab:result}
\end{table}

Table \ref{tab:result} summarizes some existing results on the average-case online bin packing problem. \cite{shor1986average} first establishes a lower bound of $\Omega(\sqrt{T\log T})$ for the problem. The lower bound by \cite{shor1986average} considers an item size distribution of Uniform$[0,1]$, and the derivation relies on the critical assumption of not knowing $T$. In the following, we compare our result against the existing upper bounds. 

\textbf{Uniform$[0,1]$ distribution.} In the same paper, \cite{shor1986average} shows that the Best Fit algorithm achieves $\Theta(\sqrt{T})$ regret bound if the underlying item size distribution is Uniform$[0,1]$ and the horizon length $T$ is known in advance. In a similar vein, \cite{asgeirsson2002closed} also considers the problem with a known $T$ and analyzes a variant called Closed Best Fit that also achieves $\Theta(\sqrt{T})$ regret. \cite{shor1991pack} develops an algorithm that achieves $O(\sqrt{T\log T})$ regret but does not need the knowledge of $T$. The distribution of Uniform$[0,1]$ plays a critical role in all these analyses. Specifically, when the item size distribution is $\mathrm{Unif}[0,1]$, the (near) optimal solution of the bin packing problem can be viewed as a matching of items smaller than $1/2$ with items larger than $1/2$, for both the case of known $T$ and unknown $T$. Thus the analysis is usually based on an argument that each bin ideally should contain exactly two items: one of size $1/2 - \epsilon$ and the other of size $1/2 + \epsilon$ for some $\epsilon\in(0,1/2)$. When the item size follows a general distribution, the optimal offline solution may have a more complicated structure which prohibits such analysis. 

\textbf{Distribution with integer support.} Another stream of works \citep{csirik2006sum, gupta2020interior, banerjee2020uniform} assume the bin size to be $B\in \mathbb{N}$ and the item size distribution is integer-valued and supported on $\{1,...,B-1\}$. A backbone for the analyses in \citep{csirik2006sum, gupta2020interior} is the categorization of the item distribution into Linear Waste, Perfectly Packable, and Perfectly Packable with Bounded Waste by \cite{courcoubetis1986necessary}. The categorization is based on the optimal objective value of the \textit{Waste Linear Program} (LP) associated with item size distribution which measures the expected waste bin space in the offline optimal solution. Both \cite{csirik2006sum} and \cite{gupta2020interior} use some potential function as a heuristic to guide the online packing decision and both achieve $O(B\sqrt{T})$ regret for the problem. The algorithms in these two papers do not rely on the knowledge of underlying distribution or the horizon length $T$, and are efficient to implement, but we believe the factor $B$ in the regret bound is hard to get rid of for these algorithms. \cite{banerjee2020uniform} take a different approach by viewing the online bin packing problem as a dynamic programming problem. The algorithm by \cite{banerjee2020uniform} requires the knowledge of the item distribution and the horizon length $T$, solves an integer LP at each time period, and uses its optimal solution to guide the online decision. While the other regret bounds in Table \ref{tab:result} are all problem-independent, \cite{banerjee2020uniform} provide a problem-dependent regret bound that bears no dependency on $T$ but involves a number of parameters related to the item size distribution and bin size $B$. As far as we understand, there is no clear way to transform the problem-dependent regret therein into a problem-independent one. 

\textbf{General distribution.} To the best of our knowledge, the only existing results that consider a general item size distribution are done by Rhee and Talagrand in a series of works \citep{rhee1988exact, rhee1988optimal1, rhee1989optimal2, rhee1989optimal3, rhee1993line2, rhee1993line1}. The general distribution may be continuous like Uniform$[0,1]$, discrete like the integer-valued case or a mixture of them. These works consist of two parts: the first part \citep{rhee1988exact, rhee1988optimal1, rhee1989optimal2, rhee1989optimal3} studies the stochastic bin packing problem, and the second part \citep{rhee1993line2, rhee1993line1} builds upon the result and concerns the online bin packing problem. For the first part, the stochastic bin packing packing can be viewed as the analysis of the offline benchmark in the online bin packing problem. Specifically, suppose $\mathcal{X}_T=\{X_1,...,X_T\}$ is a set of item sizes sampled independently from some distribution $F$. The stochastic bin packing problem concerns the analysis of the $E[\text{OPT}(\mathcal{X}_T)]$. \cite{rhee1988optimal1} characterizes the class of distributions such that $\lim_{T\rightarrow \infty}\mathrm{OPT}(\mathcal{X}_T)/T = \E[X_t]$. This property correspond to the case of Bounded Waste or Perfectly Packable in the \cite{csirik2006sum}. \cite{rhee1989optimal2} further provide a bound that $|\E[\mathrm{OPT}(\mathcal{X}_T)] - T \cdot c| \leq K \sqrt{T \log T}$, where $c$ is the asymptotic benchmark to be defined later in our paper. In a subsequent work \citep{rhee1989optimal3}, the authors further improve the bound but in an asymptotic sense. Specifically, they show that the factor of $\log T$ is redundant by proving that the asymptotic distribution of $(\mathrm{OPT}(\mathcal{X}_T) - T \cdot c)/\sqrt{T}$ exists and is specified by a Gaussian process related to the item size distribution. Compared to these results on the stochastic bin packing problem, we explicitly show that $|\E[\mathrm{OPT}(\mathcal{X}_T)] - T \cdot c| \leq C \sqrt{T}$ for some $C\le 11$. Our result is finite-sample and bears no dependency on the underlying distribution of $X_t$. For the second part, \cite{rhee1993line1, rhee1993line2} build upon the previous work and provide an online bin packing algorithm under the stochastic model where $T$ is unknown. The regret bound for their algorithm is $O(\sqrt{T} \log^{\frac34}T)$. As noted in an earlier paper \citep{rhee1988exact}, the regret analysis relies on the up-right matching problem thus the bound cannot be further improved. Our algorithm can be viewed as a simplification of the algorithm therein by utilizing the knowledge of $T$, though the analyses are completely different. 

%; and as far as we understand, their analysis only work for continuous distributions as the analysis relies on a property of Kolmogorov-Smirnov statistics.

\subsection{Our Contribution}

To summarize, the contribution of our paper lie in the following three aspects:

\begin{itemize}
    \item Online bin packing with known $T$. We develop a natural adaptive algorithm that achieves $O(\sqrt{T})$ regret bound for the online bin packing problem with known horizon length $T$. The algorithm and its regret bound apply to any arbitrary item size distribution. 
    \item Benchmark analysis for stochastic bin packing. The key to our regret analysis is the usage of the maximal inequality of the exchangeable random variables. A byproduct of our analysis simplifies and tightens the existing bound on the stochastic bin packing problem. 
    \item Online bin packing under random permutation model. Our algorithm and analysis extend to the random permutation model as well. As far as we know, our paper first shows that sublinear regret is achievable for the online bin packing problem under the random permutation model. 
\end{itemize}

\subsection{Other Literature}

Our algorithm naturally solves a number of offline problems and uses the offline solution to guide the online decisions. In this sense, the result complements to the recent works on using offline data for online decision making \citep{bu2020online, wang2020measuring}. On a high level, these works focus on how the offline data can be used to refine the knowledge of the underlying distribution, while our analysis detaches from the learning of the underlying distribution and only takes advantage of a symmetry between the offline and online data. 

Another stream of literature on online bin packing takes an algorithm-centric perspective and analyzes the performance of certain heuristic algorithms. For example, Best Fit \citep{johnson1973near}, First Fit \citep{johnson1973near, garey1976resource}, Next Fit \citep{johnson1973near}, Any Fit  \citep{johnson1974fast}, and also the survey paper \citep{coffman1996approximation}. Our algorithm, though only requires solving $O(\log T)$ many offline oracles and permits LP-based approximate solution, is nonetheless efficient compared to these heuristics. In this light, our analysis has a more information-theoretical focus and aims to provide new insights and develop analytical tools for the problem. 

The rest of the paper is organized as follows. In Section \ref{sec_model}, we present our main algorithm and the first step of our regret analysis. In Section \ref{sec_stoc}, we analyze the stochastic bin packing benchmarks and derive a regret bound for our algorithm under the stochastic model. In Section \ref{sec_RP_AA}, we analyze our algorithm under the random permutation model and show its compatibility with an approximate solution to the offline problem. In Section \ref{sec_experi}, we conclude our discussion with an LP-based adaptive algorithm that gives better numerical performance and leaves its analysis as a future open question.

\section{Model Setup and Algorithm}

\label{sec_model}

We are given an infinite collection of empty bins, into which a sequence of $T$ items of sizes $X_1,...,X_T$ is to be packed with the goal of minimizing the number of bins used. The items arrive in an online fashion and each item needs to be packed instantly after its arrival and without the knowledge of future item sizes. It does not hurt to assume the empty bins have capacity $1$ and item size $X_t\in(0,1)$ for all $t\in[T].$ We consider both the \textit{stochastic} model where the item sizes $X_t$'s are identically and independently drawn from an unknown distribution $F$ and the \textit{random permutation} model where the item sizes $X_t$'s can be chosen adversarially but they arrive in a randomly permuted order. In this section, the algorithm and the analysis work for both models; in the following sections, we will separate the discussion for the two models. 

We measure the performance of an online algorithm by the gap between the number of bins used by the algorithm and the hindsight/offline optimal --  the number of bins used by an optimal packing scheme which knows all the item sizes. Specifically, we define the regret of an algorithm for a problem instance $\mathcal{X}_T=\{X_1,...,X_T\}$ as 
$$\text{Reg}(\mathcal{X}_T) \coloneqq N^\pi(\mathcal{X}_T) - \text{OPT}(\mathcal{X}_T)$$
where $\text{OPT}(\mathcal{X}_T)$ denotes the number of bins used by an offline optimal solution and $N^\pi(\mathcal{X}_T)$ denotes the number of bins used by an online algorithm $\pi.$ Throughout the paper, we use $\mathcal{X}_t$ to denote the set $\{X_1,...,X_t\}$ and $\mathcal{X}_{s:t}$ to denote the set $\{X_s,...,X_t\}$ for $s\le t$. Eventually, we will aim to derive a problem-independent/worst-case regret by taking supremum with respect to the generation mechanism of $\mathcal{X}_T$. For this section, we refrain from specifying the underlying randomness so as to derive common results for both the stochastic and random permutation models.

Next, we define the offline bin packing problem by adopting its standard formulation following the presentation in \citep{williamson2011design}. Let $\mathcal{X}$ denote a finite set of items and suppose that there are $m$ different sizes of items in total, named as $0<s_1 < s_2 < \dots < s_m<1$. And there are $b_1, b_2,\dots, b_m$ items of each size accordingly, with $\sum_{i=1}^m b_i = |\mathcal{X}|$. Then consider the ways that a single bin can packed. Each type of packing can be uniquely described by a $m$-tuple $(t_1,\dots,t_m)$ where $t_i$ denotes the number of $s_i$ items. We call such an $m$-tuple as a \emph{configuration} if $\sum_{i=1}^m s_i t_i \leq 1$. Denote the total number of configurations by $N$ (possibly exponentially many of them). Let $A_1,\dots,A_N$ be a complete enumeration of them, where $a_{ij}$ denotes the $i$-th component of $A_j$.
Then the offline bin packing problem is described by the following integer LP
\begin{align*}
   \text{OPT}(\mathcal{X}) \coloneqq \min \  &\sum_{j=1}^N z_j \\
    \mathrm{s.t.}\   & \sum_{j=1}^N a_{ij}z_j \geq b_i,\quad i=1,\dots,m,\\
   &  z_j \in \mathbb{N}, \quad j=1,\dots,N, 
\end{align*}
where the decision variable $z_j$ represents the number of bins packed according to the $j$-th configuration.

The relaxed version of the above integer programming is as follows where the integer constraint is relaxed. We denote its optimal value as $\text{OPT}_f(\mathcal{X})$.
\begin{align*}
    \text{OPT}_f(\mathcal{X}) \coloneqq \min \ &\sum_{j=1}^N z_j \\
    \mathrm{s.t.}\ & \sum_{j=1}^N a_{ij}z_j \geq b_i,\quad i=1,\dots,m,\\
    & z_j \geq 0, \ \  j=1,\dots,N.
\end{align*}
The renowned result in \cite{karmarkar1982efficient} provides a relation between the bin packing problem and its relaxed LP problem. 
\begin{proposition}[\cite{karmarkar1982efficient}]
\label{prop:relaxedgap} The following inequality holds for any arbitrary set $\mathcal{X},$
$$ \mathrm{OPT}_f(\mathcal{X}) \leq \mathrm{OPT}(\mathcal{X}) \leq \mathrm{OPT}_f(\mathcal{X}) + 4 \log_2^2(|\mathcal{X}|) + 17 \log_2(|\mathcal{X}|) + 11.$$
\end{proposition}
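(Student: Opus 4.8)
The plan is to treat the two inequalities separately. The left inequality $\mathrm{OPT}_f(\mathcal{X}) \leq \mathrm{OPT}(\mathcal{X})$ is immediate: every feasible solution of the integer program is feasible for the relaxation with the same objective, so the relaxed minimum can only be smaller. All the work is in the right inequality, for which I would follow the iterated LP-rounding scheme of Karmarkar and Karp; write $n = |\mathcal{X}|$ throughout.

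First I would record two structural facts about the relaxation. (i) Since the LP has $m$ inequality constraints (one per distinct size), it admits a basic optimal solution $z^\star$ whose support has at most $m$ entries. (ii) Rounding down, the integral vector with entries $\lfloor z_j^\star\rfloor$ uses at most $\sum_j z_j^\star = \mathrm{OPT}_f(\mathcal{X})$ bins and packs a sub-multiset of $\mathcal{X}$; the unpacked items form a residual instance $\mathcal{X}_{\mathrm{res}}$ whose relaxed value obeys $\mathrm{OPT}_f(\mathcal{X}_{\mathrm{res}}) \leq \sum_j (z_j^\star - \lfloor z_j^\star\rfloor) \leq m$, since only the at most $m$ fractional coordinates contribute and each is below $1$. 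Hence $\mathrm{OPT}(\mathcal{X}) \leq \mathrm{OPT}_f(\mathcal{X}) + \mathrm{OPT}(\mathcal{X}_{\mathrm{res}})$, and the task reduces to bounding $\mathrm{OPT}(\mathcal{X}_{\mathrm{res}})$, a quantity governed by $m$, the number of distinct sizes.

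The difficulty is that $m$ may be as large as $n$, so a single round of rounding does not suffice. To make progress I would first drive down the number of distinct sizes by a grouping (rounding-up) step and then iterate. Two preprocessing reductions keep the error controlled: discard items of size below $1/n$ (their total size is at most $1$, so they can be appended at the end into $O(1)$ extra bins via First Fit, and deleting them cannot raise $\mathrm{OPT}_f$ by monotonicity); and apply a geometric grouping to the surviving items, sorting them by size, forming groups with geometrically increasing boundaries, and replacing every item of a group by the largest size occurring in it. Because grouping only rounds sizes up, any packing of the grouped instance induces a packing of the original, so $\mathrm{OPT}_f$ and $\mathrm{OPT}$ of the grouped instance dominate those of the original, while the top group can be stripped off and packed separately at a per-round overhead of $O(\log n)$.

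Combining the pieces yields the iterated scheme: each stage groups the current items down to few distinct sizes, solves the LP, rounds down, pays an additive cost controlled by the distinct-size count together with the top-group and small-item overheads, and recurses on the residual. The engineering that makes this produce $O(\log^2 n)$ rather than $O(n)$ is that the grouping must be tuned so that the per-stage distinct-size count is $O(\log n)$ while the residual provably contracts, leaving only $O(\log n)$ effective stages; the explicit constants $4\log_2^2 n + 17\log_2 n + 11$ then emerge from carrying the exact per-stage error through the recursion. I expect the main obstacle to be precisely the design and accounting of this grouping step: one must simultaneously guarantee that sizes only round up (so the relaxation value never drops), that the distinct-size count and residual contract at the required rate, and that the stripping overhead stays $O(\log n)$. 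Balancing these three competing requirements is the delicate heart of the argument, whereas the basic-solution rounding and the final constant-tracking are comparatively routine.
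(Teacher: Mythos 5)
Your proposal is correct and follows essentially the same route as the paper, whose ``proof'' is simply a citation of Lemma 3, Theorem 4, and Algorithm 2 of Karmarkar and Karp (1982); your sketch is a faithful reconstruction of exactly that argument (basic optimal solution with support at most $m$, round down, bound the residual by the fractional parts, and iterate with geometric grouping to control the number of distinct sizes). You correctly identify that the only non-routine content is the tuning of the grouping step and the constant-tracking, which is precisely what the cited reference supplies and what the paper itself does not reproduce.
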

\begin{proof}
The left-hand-side is obvious from the relaxation.
For the right-hand-side, it can be obtained from the analyses of Lemma 3 and Theorem 4 in \cite{karmarkar1982efficient}. Specifically, Algorithm 2 in \cite{karmarkar1982efficient} is capable of producing a feasible solution for the integer program within $\mathrm{OPT}_f(\mathcal{X}) + 4 \log_2^2(|\mathcal{X}|) + 17 \log_2(|\mathcal{X}|) + 11$ bins. 
\end{proof}

\begin{algorithm}[ht!]
\caption{Online Adaptive Overflow Algorithm}
\label{alg:overflow}
\begin{algorithmic}[1]
\Require Time horizon $T$. 
\Ensure A packing scheme for the $T$ items.
\State Let $K = \lceil \log_2(T) \rceil$ and $\ T_k = \lceil \frac{T}{2^{K-k}} \rceil$ for $k=0,1,...,K.$
\State Observe item $X_1$ and pack it into a new bin.
\State Initialize $t = 2$.
\For{phase $k=1,\dots, K$}
        \State Re-order the previously observed items ascendingly. Specifically,  $$\mathcal{X}_{T_{k-1}} = \{X^{(1)}, \dots , X^{(T_{k-1})}\} \text{ where } X^{(1)} \leq \dots \leq X^{(T_{k-1})}.$$
        \State Solve the offline bin packing problem  $\mathrm{OPT}(\mathcal{X}_{T_{k-1}})$ 
        \State Prepare $\mathrm{OPT}(\mathcal{X}_{T_{k-1}})$ new bins. 
        \State Record the packing location for each item in $\mathcal{X}_{T_{k-1}}$ with the mapping $$h_k: \{1,...,T_{k-1}\} \rightarrow \{1,...,\mathrm{OPT}(\mathcal{X}_{T_{k-1}})\},$$i.e., the item $X^{(s)}$ is packed in the bin $h_k(s)$.
        \State Initialize the location vacancy indicator $I_1 = \dots = I_{T_{k-1}} = 1$.
        \While{$t \leq T_k$}
            \State Observe item $X_t$.
            \State Find the packing (next vacant) location $v(X_t) = \min \{s|X^{(s)} \geq X_t, I_s = 1, s\in[T_{k-1}]\}$.
            \State \%\% The minimum of an empty set is defined to be $\infty,$ i.e., $\min \{\}=\infty.$
            \If{$v(X_t) < \infty$}
                \State Pack $X_t$ into the bin of index $h_k(v(X_t))$.
                \State Update $I_{v(X_t)} \leftarrow 0.$
            \Else
                \State Pack $X_t$ into a new bin.
            \EndIf
            \State $t \leftarrow t + 1$.
        \EndWhile
        \State $k \leftarrow k + 1$.
\EndFor
\end{algorithmic}
\end{algorithm}

Algorithm \ref{alg:overflow} describes our online bin packing algorithm. It solves $K=\ceil{\log_2 T}$ offline bin packing problems throughout the procedure and assumes that an exact optimal solution is available. Later in Section \ref{sec_approx}, we will show that the offline solver can also be replaced by its LP relaxation and an according approximate solution. At the beginning of each phase $k=1,...,K$, we collect all the previously observed item sizes into a set $\mathcal{X}_{T_{k-1}}$ and solve the problem $\text{OPT}(\mathcal{X}_{T_{k-1}}).$ The obtained optimal solution will prescribe our packing decisions until time $T_{k}$, and then a new offline bin packing problem will be solved. Specifically, we order the previously observed item sizes ascendingly by
$$X^{(1)} \leq \dots \leq X^{(T_{k-1})}$$
and, at the same time, index the bins used in the optimal solution of $\text{OPT}(\mathcal{X}_{T_{k-1}}).$ Then we record the packing bin (slot) of each item with the mapping $h_k(\cdot)$ and conclude the offline part of the $k$-th phase. For the online part, at each time $t=T_{k-1}+1,...,T_{k}$, when a new item of size $X_t$ arrives, we resort to the offline packing scheme and find the smallest vacancy index $v(X_t)$ such that (i) $X_t$ is no greater than $X^{(v(X_t))}$ in size and (ii) the slot $h_k(v(X_t))$ is available. When such an index exists, we will pack the item $X_t$ into the slot of item $X^{(v(X_t))}$ in the offline packing scheme and mark that slot as unavailable. Otherwise, if all the slots that can hold $X_t$ are unavailable, we will pack $X_t$ into a new bin. 

The algorithm works regardless of the underlying model and it is a ``learning-based'' algorithm in that it utilizes the history observations for future decision making, though it does not perform an explicit fitting of the underlying distribution. The history observations are utilized in a non-parametric way and connect with the future decisions through the offline optimal packing scheme. In the rest of this section, we will develop a symmetric view between the history and future observation and use this symmetricity to derive the first analytical result for the algorithm. 

\subsection{Exchangeable Random Variables}

We detour a bit for the moment and introduce the notion of exchangeable random variables, also known as exchangeable sequence of random variables. 

\begin{definition}[Exchangeable Random Variables]
For any sequence of random variables $\{\xi_1, \xi_2, \dots\}$ which may be finitely or infinitely long, it is called exchangeable if its joint distribution does not change when finitely many of the random variables are permuted.
\end{definition}

Under the definition, if $\sigma$ denotes a finite permutation of the indices $1,2,3,...$ (i.e., only finitely many indices change positions), then the joint distribution of the permuted sequence $\{\xi_{\sigma(1)}, \xi_{\sigma(2)}, ... \}$ is the same as the sequence $\{\xi_{1}, \xi_{2}, ... \}$. The main result that we will use in our paper is the Maurey-Pisier maximal inequality of exchangeable random variables \citep{maureyannexe}. Here we adopt the presentation in \cite{chobanyan2001exact} as the following proposition.

\begin{proposition}[Theorem 1.1 in \cite{chobanyan2001exact}]
\label{prop:exchange}
Let $\{\xi_1, \xi_2, \dots, \xi_n\}$ be a finite sequence of exchangeable random variables taking real values with $\sum_{i=1}^n \xi_i = 0$. Then the following two-sided inequality holds:
$$\frac12 \E\left[\left\vert\sum_{i=1}^n \xi_i r_i\right\vert\right] \leq \E\left[\max_{k \leq n}\left\vert\sum_{i=1}^k \xi_i\right\vert\right] \leq 2\E\left[\left\vert\sum_{i=1}^n \xi_i r_i\right\vert\right],$$
where $\{r_i\}$ is a sequence of Rademacher random variables independent of $\{\xi_i\}.$ Specifically, $\prob(r_i=1) = \prob(r_i=-1) =\frac{1}{2}.$
\end{proposition}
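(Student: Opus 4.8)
The plan is to reduce everything to a statement about the partial sums together with a random subset sum, and to treat the two inequalities separately. Throughout, write $S_k := \sum_{i=1}^k \xi_i$ (so that $S_0 = S_n = 0$ by assumption) and $R := \sum_{i=1}^n \xi_i r_i$. The single structural fact I would use repeatedly is that, for any \emph{deterministic} index set $A \subseteq [n]$ with $|A| = k$, exchangeability gives $\sum_{i \in A}\xi_i \stackrel{d}{=} \sum_{i=1}^k \xi_i = S_k$.

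For the left inequality I would start from the identity $\tfrac12 R = \sum_{i \in A}\xi_i$, where $A := \{i : r_i = 1\}$; this uses only $\sum_{i=1}^n \xi_i = 0$. Since the $r_i$ are independent of the $\xi_i$, the set $A$ is a uniform random subset of $[n]$ independent of $(\xi_i)$, and conditional on $\{|A| = k\}$ it is uniform over the $k$-subsets. Combining this with the displayed exchangeability fact yields $\E\big[\,|\sum_{i\in A}\xi_i| \,\big|\, |A| = k\,\big] = \E|S_k|$, whence $\tfrac12\E|R| = \sum_{k=0}^n \binom{n}{k}2^{-n}\,\E|S_k| \le \E[\max_{k\le n}|S_k|]$, the last step bounding each term by the running maximum and using $\sum_k \binom{n}{k}2^{-n} = 1$. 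This proves the left inequality.

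The right inequality is the genuine maximal inequality and is where the real work lies. The obstacle is that $(S_k)$ is a \emph{bridge} ($S_n = 0$), which destroys the tail-symmetry that drives Lévy's reflection inequality for sums of independent symmetric increments; the Rademacher signs are what restore this symmetry. The backbone is a reflection identity: conditioning on $(\xi_i)$ and letting $\sigma$ be the first index at which $\max_k S_k$ is attained, flipping the signs of all increments after $\sigma$ turns the all-plus partial sum into a signed sum equal to $2\max_k S_k$ (here $S_n = 0$ is used), with the mirror-image construction handling $-\min_k S_k$. Given $(\xi_i)$ the increments $r_i \xi_i$ are independent and symmetric, so Lévy's inequality already gives $\E[\max_k |\sum_{i\le k} r_i\xi_i|] \le 2\E|R|$; I would then convert the pointwise reflection identity into the expectation bound $\E[\max_k|S_k|] \le 2\E|R|$ by averaging the reflection over both the order of the $\xi_i$ (via exchangeability) and the signs $r_i$, so that the reflected configuration is shown to be equidistributed with the original.

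I expect the delicate point to be precisely this averaging of the reflection: making the ``flip the tail after the first passage'' argument rigorous when the increments are only exchangeable rather than i.i.d., so that the reflected path is genuinely equidistributed with the original and the constant $2$ is retained. Everything else --- the subset-sum identity, the binomial weights, and Lévy's inequality for the symmetric signed increments --- is routine.
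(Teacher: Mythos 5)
The paper does not prove this proposition at all: it is imported verbatim as Theorem~1.1 of \cite{chobanyan2001exact} (the Maurey--Pisier maximal inequality), so there is no in-paper argument to compare yours against. Judged on its own terms, your proof of the \emph{left} inequality is correct and complete: the identity $\tfrac12\sum_i r_i\xi_i=\sum_{i\in A}\xi_i$ with $A=\{i:r_i=1\}$ (valid because $\sum_i\xi_i=0$), the observation that $A$ is a uniform random subset independent of $(\xi_i)$, the exchangeability step $\E\bigl[\bigl|\sum_{i\in A}\xi_i\bigr|\,\big|\,|A|=k\bigr]=\E|S_k|$, and the final bound $\sum_k\binom{n}{k}2^{-n}\E|S_k|\le\E[\max_k|S_k|]$ are all sound.

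The right inequality, however, is where the entire content of the theorem sits, and your sketch has a genuine gap there. The reflection at the first argmax $\sigma$ produces a single, data-dependent ``staircase'' sign pattern $\epsilon^{(\sigma)}=(+,\dots,+,-,\dots,-)$ with $\sum_i\epsilon^{(\sigma)}_i\xi_i=2\max_kS_k$; but $\epsilon^{(\sigma)}$ is a measurable function of $\xi$ ranging over only $n+1$ patterns, whereas $r$ is uniform over all $2^n$ patterns and independent of $\xi$, so there is no equidistribution or stochastic domination between $(\xi,\epsilon^{(\sigma)})$ and $(\xi,r)$ to average over --- and indeed the reflected increment sequence cannot be equidistributed with anything summing to zero, since its total is $2\max_kS_k\ge 0$. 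L\'evy's inequality, which you invoke, controls $\E[\max_k|\sum_{i\le k}r_i\xi_i|]$ by $2\E|\sum_i r_i\xi_i|$, i.e.\ the maximum of the \emph{signed} partial sums; it says nothing about $\E[\max_k|S_k|]$ unless you first establish a desymmetrization step such as $\E[\max_k|S_k|]\le\E[\max_k|\sum_{i\le k}r_i\xi_i|]$, which is essentially equivalent to the hard direction of the theorem and is exactly what you defer as ``the delicate point.'' Equivalently, by your own left-inequality computation the claim reduces to $\E[\max_k|S_k|]\le 4\sum_k\binom{n}{k}2^{-n}\E|S_k|$, a genuine maximal inequality for exchangeable bridges that your outline does not supply. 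Note also that the explicit constant $2$ is load-bearing for the paper's quantitative claims ($C\le 11$ and $C\le 13$), so even a lossy completion of this step would not recover the stated result.
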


The middle term in the above inequality concerns the maximal partial sum of the random variable sequence. Specifically, the inequality bounds the maximal partial sum by a more tractable form, that is, the sum of the whole random variable sequence weighted by an independent sequence of Rademacher random variables. The inequality is known as the maximal inequality for exchangeable random variables, and it can be interpreted as a discretized version (though more complicated to derive) of the maximal inequality of a Brownian bridge which can be computed in closed-form (See \cite{durrett2019probability}). The inequality also shares the similar intuition with the symmetrization arguments in empirical process \citep{pollard2012convergence} and statistical learning theory \citep{gyorfi2002distribution}.

Compared to an i.i.d. sequence of random variables, the advantage of exchangeable random variables is that it allows some dependency between the underlying random variables as long as they preserve some permutation symmetry. As we will see, this flexibility of the dependency brings much convenience to the analysis of the packing dynamic.

\subsection{A Warm-up Analysis of the Packing Dynamic}

\label{sec_warm_up}

In this subsection, we single out one phase of the algorithm and analyze the packing dynamic, namely, the interplay between the offline optimal solution and the online packing procedure. Let $\mathcal{X}$ denote the set of items in the offline problem (observed in the previous phases) and let $\mathcal{X}'$ denote the set of items encountered in the online procedure of the current phase. In the context of Algorithm \ref{alg:overflow}, $\mathcal{X}$ and $\mathcal{X}'$  represent the sets $\mathcal{X}_{T_{k-1}}$ and $\mathcal{X}_{(T_{k-1}+1):T_k}$, respectively. It does not hurt to assume $|\mathcal{X}| = |\mathcal{X}'|=N$ now for some $N\in\mathbb{N}$. In fact, the cardinalities of these two set may differ up to $1$ while implementing the algorithm, but this makes no essential change to the analysis. Under both the stochastic model and the random permutation model, the distributions of $\mathcal{X}$ and $\mathcal{X}'$ are the same, 
$$\mathcal{X} \overset{\mathcal{D}}{=} \mathcal{X}'.$$
Next, we place all the item sizes in $\mathcal{X}\cup \mathcal{X}'$ upon the interval $[0,1]$ as in Figure \ref{Fig:packing}.

\begin{figure}[ht!]
\centering
\includegraphics[scale=0.4]{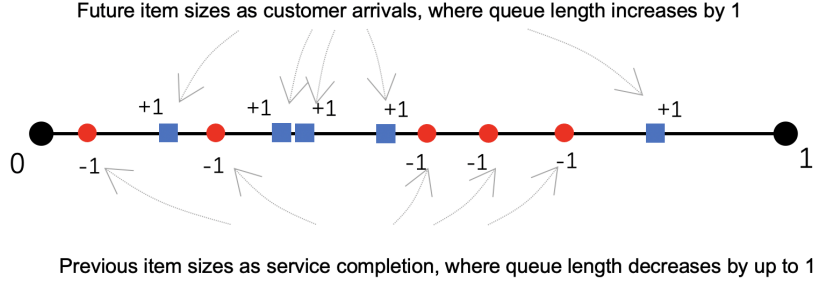}
\caption{Illustration of the packing dynamic over one phase}
\label{Fig:packing}
\end{figure}

Specifically, if there are duplicated item sizes in the set, we randomly permute them to decide their orders on the interval. In this way, all the item sizes from $\mathcal{X}\cup \mathcal{X}'$ are placed on the interval with an absolute order. Then, we mark a ``$-1$'' for an item size from $\mathcal{X}$ and a ``$+1$'' for an item size from $\mathcal{X}'$. 

To better understand the packing dynamic of the phase, we consider a single-server queue defined by traversing from $0$ to $1$ throughout the interval. On a high level, all the items in $\mathcal{X}'$ (the online phase) are viewed as arriving ``customers'' to be served and all the items in  $\mathcal{X}$ (the offline optimal solution) are viewed as completing ``services''. The rule of Algorithm \ref{alg:overflow} is that the arriving customers (items in $\mathcal{X}'$) can only be served by the completed services (items in $\mathcal{X}$) that have a no smaller size. Specifically, we treat all the $-1$'s as service tokens and all the $+1$'s as arrival tokens, namely, the time that a service is completed or a new customer arrives. Each $+1$ represents a new item to be packed, and it can only be packed (served) by a $-1$ to its left on the interval; the reason for the left is that an item can fit in a slot (in the offline solution) that holds an item greater or equal to its size. Mathematically, there are $2N$ tokens on the interval, and each token is associated with a $-1$ or $+1.$ We denote these $2N$ numbers as $\xi_1,...,\xi_{2N}$, the locations of which span from left to right on the interval. A queue length process can be defined by 
$$Q_0=0, \ \ Q_{n} = \max\{0, Q_{n-1}+\xi_n\}$$
for $n=1,...,2N.$ To see the queue dynamic, when a new item arrives to be packed, it joins the queue to wait for the next available slot. When a -1 is encountered, the queue length decreases by one if the current queue length is positive; if the current queue length is zero, the slot in the offline solution corresponding to this -1 is wasted, and the queue length remains to be zero. The queue length process $Q_n$'s describes the packing dynamic of Line 10-21 in Algorithm \ref{alg:overflow}. Importantly, it is easy to verify that the online packing routine in phase $k$ will use 
$$\text{OPT}(\mathcal{X}) + Q_{2N}$$ many bins where $\mathcal{X}$ represents $\mathcal{X}_{T_{k-1}}$ and $N=|\mathcal{X}|=T_{k-1}.$ 

Now we turn our attention to the joint distribution of $\xi_1,...,\xi_{2N}$. Recall that $\mathcal{X}\overset{\mathcal{D}}{=}\mathcal{X}'.$ There are $N$ of $+1$ and $N$ of $-1$ among these $2N$ random variables; as a result, 
$$\sum_{n=1}^{2N} \xi_n = 0.$$
Furthermore, applying the symmetry argument again for $\mathcal{X}$ and $\mathcal{X}',$ we know that the joint distribution of $\xi_1,...,\xi_{2N}$ is that of randomly selecting $N$ indices out of $\{1,2,...,2N\}$ and letting them to be $+1$ with the remaining to be $-1$. Therefore, the joint distribution is invariant under a finite random permutation, and thus $\xi_1,...,\xi_{2N}$ is a sequence of exchangeable random variables. Then we can apply Proposition \ref{prop:exchange} and yield the following bound.

\begin{proposition}
The following inequality holds for any $N\in\mathbb{N},$
$$\E[Q_{2N}] \le 2\sqrt{2N}.$$
\label{prop_queue}
\end{proposition}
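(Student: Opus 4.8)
The plan is to bound the expected terminal queue length $\E[Q_{2N}]$ by connecting it to the maximal partial sum of the exchangeable sequence $\xi_1,\dots,\xi_{2N}$ and then invoking Proposition \ref{prop:exchange}. First I would establish the elementary fact that the reflected queue recursion $Q_n = \max\{0, Q_{n-1}+\xi_n\}$ admits the closed-form representation
$$Q_{2N} = \max_{0 \le k \le 2N}\left(\sum_{i=k+1}^{2N} \xi_i\right) = \max_{0 \le k \le 2N}\left(S_{2N} - S_k\right),$$
where $S_k = \sum_{i=1}^k \xi_i$ and $S_0 = 0$. This is the standard Lindley/Reich identity for a single-server queue: unrolling the recursion shows the terminal queue length equals the largest suffix sum of the increments. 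Since $S_{2N} = \sum_{n=1}^{2N}\xi_n = 0$, the expression collapses to $Q_{2N} = \max_{0\le k\le 2N}(-S_k) = \max_{0 \le k \le 2N}(-S_k)$, which is bounded above by $\max_{k \le 2N}|S_k|$.

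Next I would take expectations and apply the right-hand inequality of Proposition \ref{prop:exchange}. The sequence $\xi_1,\dots,\xi_{2N}$ is exchangeable with $\sum_{n=1}^{2N}\xi_n = 0$ (both facts are already established in the text preceding the statement), so the proposition gives
$$\E[Q_{2N}] \le \E\left[\max_{k \le 2N}\left\vert\sum_{i=1}^k \xi_i\right\vert\right] \le 2\,\E\left[\left\vert\sum_{i=1}^{2N} \xi_i r_i\right\vert\right],$$
where $\{r_i\}$ are independent Rademacher variables. The final step is to control the symmetrized sum $\E[|\sum_{i=1}^{2N}\xi_i r_i|]$. Since each $\xi_i \in \{-1,+1\}$, the product $\xi_i r_i$ is itself a symmetric $\pm 1$ variable, and conditional on $\{\xi_i\}$ the weighted sum $\sum_i \xi_i r_i$ has the same distribution as $\sum_i r_i$ (because multiplying a Rademacher variable by any fixed sign keeps it Rademacher). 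Thus $\E[|\sum_i \xi_i r_i|] = \E[|\sum_i r_i|]$, and by Jensen's inequality $\E[|\sum_{i=1}^{2N} r_i|] \le \sqrt{\E[(\sum_{i=1}^{2N} r_i)^2]} = \sqrt{2N}$. Chaining these bounds yields $\E[Q_{2N}] \le 2\sqrt{2N}$, as claimed.

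The step I expect to require the most care is the queue-to-maxsum identity. One must verify that the reflected recursion genuinely produces a suffix-sum maximum rather than a prefix-sum maximum, and correctly handle the reversal of indices; a sign error here would propagate through the whole argument. The rest is mechanical: the Rademacher reduction exploits the crucial feature that $\xi_i$ takes values in $\{\pm 1\}$, which makes the weighting by $r_i$ distributionally trivial, and the Jensen bound is routine. A minor point worth flagging is that Proposition \ref{prop:exchange} requires real-valued exchangeable variables summing to zero, both of which hold here, so the hypotheses are met without extra work.
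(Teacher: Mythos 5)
Your proof is correct and follows essentially the same route as the paper's: the Lindley-type identity reducing $Q_{2N}$ to a maximum of negated partial sums (using $\sum_n \xi_n = 0$), the maximal inequality of Proposition \ref{prop:exchange}, and a second-moment bound on the Rademacher-weighted sum giving $\sqrt{2N}$. The only cosmetic difference is that you observe $\xi_i r_i \overset{\mathcal{D}}{=} r_i$ before applying Jensen, whereas the paper applies Cauchy--Schwarz directly and uses $\xi_n^2 r_n^2 = 1$ together with independence of the $r_n$'s; both yield the identical bound.
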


\begin{proof}
It is easy to verify that the final queue length equals to the negative of maximal partial sum, i.e.,
$$Q_{2N} = \max_{1\le n' \le 2N} -\sum_{n=1}^{n'} \xi_{n}$$
holds for all $N\in \mathbb{N}$.

Then, we have
\begin{align*} 
\E[Q_{2N}] & = \E\left[\max_{1\le n' \le 2N} -\sum_{n=1}^{n'} \xi_{n}\right] \\ 
   & \leq 
   \E\left[\max_{1\le n' \le 2N} \left\vert\sum_{n=1}^{n'} \xi_{n}\right\vert\right]\\
    & \leq  2 \E\left[\left|\sum_{n=1}^{2N} \xi_n r_n\right| \right]\\
    & \leq  2\sqrt {\E\left[\left|\sum_{n=1}^{2N} \xi_n r_n\right|^2 \right]}
    = 2 \sqrt{2N}
\end{align*}
where $r_n$'s are i.i.d. Rademacher random variables as in Proposition \ref{prop:exchange}. Here the second line is trivial, the third line applies Proposition \ref{prop:exchange}, and the last line applies the Cauchy-Schwartz inequality and uses the independence of $r_n$'s.
\end{proof}

If we apply the above proposition for each phase of the algorithm, we can obtain the following corollary. Its proof simply treats the packing dynamic of each phase as an individual queueing process and applies Proposition \ref{prop_queue}. The extra one accounts for the bin used in phase $0.$ The corollary provides a bound on the expected online bin usage by the offline optimal objectives. 

\begin{corollary}
\label{coro_on2off}
For each phase $k=1,...,K$, Algorithm \ref{alg:overflow} uses no more than
$$\E\left[\mathrm{OPT}(\mathcal{X}_{T_{k-1}})\right]+2\sqrt{2T_{k-1}}$$
many bins on expectation. Thus the online algorithm uses in total no more than $$\sum_{k=1}^K \left(\E\left[\mathrm{OPT}(\mathcal{X}_{T_{k-1}})\right]+2\sqrt{2T_{k-1}}\right) + 1$$
many bins on expectation. The results hold for both the stochastic model and the random permutation model. 
\end{corollary}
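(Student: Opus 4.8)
The plan is to reduce the corollary to a phase-by-phase application of Proposition \ref{prop_queue}, glued together by linearity of expectation. The first step is to reuse the bookkeeping established in the warm-up of Section \ref{sec_warm_up}: at the start of phase $k$ the algorithm prepares $\mathrm{OPT}(\mathcal{X}_{T_{k-1}})$ fresh template bins and then packs the items $\mathcal{X}_{(T_{k-1}+1):T_k}$ arriving during the phase either into a vacant template slot or into an overflow bin. Identifying the offline set $\mathcal{X} = \mathcal{X}_{T_{k-1}}$ with the online set $\mathcal{X}' = \mathcal{X}_{(T_{k-1}+1):T_k}$, the warm-up shows that the number of bins opened in phase $k$ equals $\mathrm{OPT}(\mathcal{X}_{T_{k-1}}) + Q_{2N}^{(k)}$, where $Q^{(k)}$ is the queue-length process built from the size-ordered $\pm 1$ tokens of that phase and $N = T_{k-1}$. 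Since the template bins of each phase are opened afresh and the bins of earlier phases are never reused, the total bin usage is obtained by summing these per-phase counts and adding the single bin opened for $X_1$ during the initialization (``phase $0$'').

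The second step is to bound each $\E[Q_{2N}^{(k)}]$ through Proposition \ref{prop_queue}. For this I would check that the setup underlying Proposition \ref{prop_queue}, and hence the hypotheses of Proposition \ref{prop:exchange}, holds separately for every $k$: under both the stochastic and the random permutation models the combined collection $\{X_1,\dots,X_{T_k}\}$ is exchangeable, so after marking the $T_{k-1}$ template items with $-1$ and the online items with $+1$ and ordering by size, the token sequence $\xi_1^{(k)},\dots,\xi_{2N}^{(k)}$ is exchangeable with zero sum, exactly as in the single-phase argument. Proposition \ref{prop_queue} then yields $\E[Q_{2N}^{(k)}] \le 2\sqrt{2T_{k-1}}$, and taking expectations in the per-phase count gives the first displayed bound $\E[\mathrm{OPT}(\mathcal{X}_{T_{k-1}})] + 2\sqrt{2T_{k-1}}$. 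Summing over $k=1,\dots,K$ and adding $1$ produces the stated total, and because every ingredient relies only on the symmetry $\mathcal{X} \overset{\mathcal{D}}{=} \mathcal{X}'$ — valid verbatim for i.i.d.\ items and for uniformly permuted arrivals — the conclusion holds for both models simultaneously.

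The one point I expect to require care is that the warm-up was phrased under the idealized equality $|\mathcal{X}| = |\mathcal{X}'| = N$, whereas the template set $\mathcal{X}_{T_{k-1}}$ and the online set $\mathcal{X}_{(T_{k-1}+1):T_k}$ may differ in cardinality by one, so that the exact zero-sum condition $\sum_n \xi_n^{(k)} = 0$ demanded by Proposition \ref{prop:exchange} is not literally met. This is the only genuine wrinkle, and the remedy is the one already flagged in the warm-up: a single surplus token perturbs both the queue trajectory and the offline objective by at most an additive constant in each phase, which is absorbed into the universal constants and does not alter the $\sqrt{T_{k-1}}$ order of the per-phase bound. I would therefore present the argument for the clean case $|\mathcal{X}|=|\mathcal{X}'|$ and remark that the off-by-one correction is harmless, keeping the stated bound intact.
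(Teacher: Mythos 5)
Your proposal is correct and follows exactly the route the paper intends: treat each phase as an independent instance of the warm-up queueing dynamic, apply Proposition \ref{prop_queue} with $N=T_{k-1}$ to get the per-phase overflow bound $2\sqrt{2T_{k-1}}$, sum over $k$ by linearity of expectation, and add one bin for the initialization of $X_1$. Your explicit verification of per-phase exchangeability under both models and your handling of the off-by-one cardinality mismatch are both consistent with (and slightly more careful than) the paper's one-line justification.
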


Proposition \ref{prop_queue} and Corollary \ref{coro_on2off} demonstrate our first usage of the exchangeable random variables and the related maximal inequality. The result reduces the analysis of the algorithm's performance to the analysis of a sequence of offline optimal objective values. As noted in the corollary, our analysis so far holds for both the stochastic model and the random permutation model because it only relies on a symmetry between the past and future observations. In the following two sections, our discussion will bifurcate into two separate streams for the two models. The focus for both streams will be on the study of the offline optimal objective values, and surprisingly, we will use the argument of exchangeable random variables again but in a more involved way.

\section{Analysis for Stochastic Model}
\label{sec_stoc}

In this section, we detach from the discussion on the algorithm and introduce a quantile-based certainty-equivalent benchmark for the bin packing problem under the stochastic model. We first establish the relation between the benchmark and the offline optimal solution, and then show an equivalence between this new benchmark and the conventional asymptotic benchmark for the problem. Finally, we conclude the section with a regret bound for Algorithm \ref{alg:overflow} under the stochastic model. 

\subsection{Quantile-Based Certainty-Equivalent Benchmark}

Recall that $F$ denotes the item size distribution. With a bit abuse of the notation, we use $F$ to refer to both the distribution itself and the cumulative distribution function of $X_t$'s.

\begin{definition}[Quantile]
For a distribution $F$ supported on $[0,1]$, define its $\alpha$-quantile as
$$ F^{-1}(\alpha) \coloneqq \inf \{y | F(y) \geq \alpha\}.$$
In addition, we define $F^{-1}(0)=0$ and $F^{-1}(1)=1$.
\end{definition}

The quantile-based certainty-equivalent benchmark is defined by an offline (deterministic) bin packing problem with item sizes being the quantiles of $F$. We do not assume the existence of the limit for now and will show in the next subsection that the limit indeed exists by Kingman's subadditive argument for the asymptotic benchmark.

\begin{definition}[Quantile-Based Certainty-Equivalent Benchmark]
For a distribution $F$ supported on $[0,1]$, define
\begin{align*}
    \mathrm{CE}(F) & \coloneqq  \limsup_{T\rightarrow \infty} \frac1T \mathrm{OPT}\left(\left\{F^{-1}\left(\frac{0}T\right),\dots,F^{-1}\left(\frac{T-1}T\right)\right\}\right)\\
    & =  \limsup_{T\rightarrow \infty} \frac1T \mathrm{OPT}\left(\left\{F^{-1}\left(\frac{1}T\right),\dots,F^{-1}\left(\frac{T}T\right)\right\}\right).
\end{align*}
\end{definition}

To make some intuition for the benchmark definition, the set of quantiles provide a good characterization of the distribution $F$. Moreover, the two optimization problems on the right-hand-side in above are deterministic. The following lemma can be easily proved by comparing the optimal solutions to the two bin packing problems. It validates the two equivalent definitions of the benchmark in above.

\begin{lemma}It holds for all distribution $F$ and $T\in \mathbb{N}^+$ that
\begin{align*}
   \mathrm{OPT}\left(\left\{F^{-1}\left(\frac{0}T\right),\dots, F^{-1}\left(\frac{T-1}T\right)\right\}\right) & \le \mathrm{OPT}\left(\left\{F^{-1}\left(\frac{1}T\right),\dots, F^{-1}\left(\frac{T}T\right)\right\}\right) \\
   & \le \mathrm{OPT}\left(\left\{F^{-1}\left(\frac{0}T\right),\dots, F^{-1}\left(\frac{T-1}T\right)\right\}\right)+1.
\end{align*}
\end{lemma}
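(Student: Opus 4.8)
The plan is to establish the two inequalities separately, each by an elementary packing-comparison argument that exploits two facts: the two item sets differ in only one element, and $F^{-1}$ is non-decreasing. To fix notation, write $L = \{F^{-1}(0/T), \dots, F^{-1}((T-1)/T)\}$ for the left-hand set and $U = \{F^{-1}(1/T), \dots, F^{-1}(T/T)\}$ for the right-hand set. Both contain exactly $T$ items, they agree on the $T-1$ middle items $F^{-1}(i/T)$ for $i = 1, \dots, T-1$, and the only discrepancy is that $L$ carries the extra boundary item $F^{-1}(0) = 0$ whereas $U$ carries the extra boundary item $F^{-1}(1) = 1$.

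For the first inequality $\mathrm{OPT}(L) \le \mathrm{OPT}(U)$, I would use a sorted (monotone) coupling of the two sets. Because $F^{-1}$ is non-decreasing, the $i$-th smallest element of $L$, which is $F^{-1}((i-1)/T)$, is no larger than the $i$-th smallest element of $U$, which is $F^{-1}(i/T)$, for every $i$. Hence, starting from any feasible packing of $U$ and replacing each item by its weakly smaller counterpart in $L$ leaves every bin load at most $1$, producing a feasible packing of $L$ with the same number of bins. Applying this to an optimal packing of $U$ gives $\mathrm{OPT}(L) \le \mathrm{OPT}(U)$.

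For the second inequality $\mathrm{OPT}(U) \le \mathrm{OPT}(L) + 1$, I would start from an optimal packing of $L$ and modify it. First discard the zero-size item $F^{-1}(0) = 0$; since it occupies no space, dropping it leaves a feasible packing of the $T-1$ common middle items in at most $\mathrm{OPT}(L)$ bins. Then open a single fresh bin for the extra item $F^{-1}(1) = 1$, which fits on its own as the bin capacity is $1$. The result is a feasible packing of all of $U$ in at most $\mathrm{OPT}(L) + 1$ bins, so $\mathrm{OPT}(U) \le \mathrm{OPT}(L) + 1$.

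I do not expect a genuine obstacle here, as both arguments are routine. The only point requiring care is the bookkeeping of the two boundary items of sizes $0$ and $1$: this mismatch is exactly what forces the additive gap of $1$ and prevents equality of the two $\mathrm{OPT}$ values in general. It is worth verifying explicitly that a single item of size $1$ forms an admissible configuration (it does, since $\sum_i s_i t_i \le 1$ holds with equality) and that removing the size-$0$ item can never violate feasibility of any bin.
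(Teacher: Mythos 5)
Your argument is correct and is essentially the comparison-of-optimal-solutions argument the paper itself invokes (the paper gives no further detail, stating only that the lemma ``can be easily proved by comparing the optimal solutions to the two bin packing problems''). The monotone coupling $F^{-1}((i-1)/T)\le F^{-1}(i/T)$ for the first inequality and the one-extra-bin modification for the second, together with your explicit handling of the boundary items of sizes $0$ and $1$, fill in exactly the intended reasoning.
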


Now we proceed to establish the relation between $\text{CE}(F)$ and $OPT(\mathcal{X}_T).$ First, the following lemma tells that the offline optimal value is upper bounded by the deterministic quantile problem plus $O(\sqrt{T})$. The proof is based on a similar symmetrization argument as the last section that we analyze the dynamic of packing $\mathcal{X}_T$ into the optimal solution of the bin packing problem on the right-hand-side. 

\begin{proposition}
\label{prop:certainty}
Suppose $\mathcal{X}_T=\{X_1, \dots, X_T\}$ consists of i.i.d. samples generated from $F$. Then, the following inequality holds for all distribution $F$ and $T\in \mathbb{N}^+$,
$$ \E[\mathrm{OPT}(\mathcal{X}_T)] \leq \mathrm{OPT}\left(\left\{F^{-1}\left(\frac{1}T\right),F^{-1}\left(\frac{2}T\right), \dots, F^{-1}\left(\frac{T}T\right)\right\}\right) + 2\sqrt{T}. $$
\end{proposition}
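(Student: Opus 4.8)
The plan is to mirror the symmetrization of Section~\ref{sec_warm_up}, but with the two roles reversed: now the deterministic quantile set plays the part of the offline solution (the $-1$ tokens) and the random sample set plays the part of the online arrivals (the $+1$ tokens). Write $\mathcal{Q}_T \coloneqq \{F^{-1}(1/T),\dots,F^{-1}(T/T)\}$ and fix an optimal packing of $\mathcal{Q}_T$ using $\mathrm{OPT}(\mathcal{Q}_T)$ bins. Feeding the i.i.d.\ items $X_1,\dots,X_T$ into this packing under the overflow rule of Algorithm~\ref{alg:overflow} (a sample is placed into the slot of the smallest still-vacant quantile no smaller than it, and into a fresh bin otherwise) produces a valid packing of $\mathcal{X}_T$ that uses $\mathrm{OPT}(\mathcal{Q}_T)+Q_{2T}$ bins, where $Q_{2T}$ is the terminal queue length of the associated process. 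Since any feasible packing upper-bounds the optimum, $\mathrm{OPT}(\mathcal{X}_T)\le \mathrm{OPT}(\mathcal{Q}_T)+Q_{2T}$, and it remains to show $\E[Q_{2T}]\le 2\sqrt{T}$.

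The essential new difficulty, compared to Proposition~\ref{prop_queue}, is that the $-1$ tokens are now deterministic, so the full length-$2T$ token sequence is no longer exchangeable and Proposition~\ref{prop:exchange} cannot be applied to it directly. To recover an exchangeable structure I would pass to the quantile (probability integral) transform, writing $X_t = F^{-1}(U_t)$ with $U_1,\dots,U_T$ i.i.d.\ $\mathrm{Unif}[0,1]$, and count the samples falling in each cell $((j-1)/T,\,j/T]$: set $M_j \coloneqq \sum_{t=1}^T \mathbf{1}\{U_t\in((j-1)/T,\,j/T]\}$ and $\eta_j \coloneqq 1-M_j$. Because $(M_1,\dots,M_T)$ is multinomial with equal cell probabilities $1/T$, it is exchangeable, and $\sum_{j=1}^T \eta_j = T - T = 0$. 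The identity from Proposition~\ref{prop_queue} gives $Q_{2T}=\max_{n'}\big(-\sum_{n\le n'}\xi_n\big)$, which equals the largest value, over thresholds $y$, of $\#\{i: F^{-1}(i/T)\le y\}-\#\{t: X_t\le y\}$. Evaluating at $y=F^{-1}(k/T)$ and using that $F^{-1}$ is nondecreasing (so $\{U_t\le k/T\}\subseteq\{X_t\le F^{-1}(k/T)\}$, hence $\#\{t:X_t\le F^{-1}(k/T)\}\ge M_1+\dots+M_k$) yields the distribution-free domination
\[
Q_{2T}\ \le\ \max_{0\le i\le T}\ \sum_{j=1}^i \eta_j\ \le\ \max_{0\le i\le T}\Big|\sum_{j=1}^i \eta_j\Big|.
\]

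With this reduction in hand, the bound follows by combining Proposition~\ref{prop:exchange}, Cauchy--Schwarz, and a variance computation:
\begin{align*}
\E[Q_{2T}]
&\le \E\Big[\max_{i\le T}\Big|\sum_{j=1}^i \eta_j\Big|\Big]
\le 2\,\E\Big[\Big|\sum_{j=1}^T \eta_j r_j\Big|\Big]
\le 2\sqrt{\E\Big[\sum_{j=1}^T \eta_j^2\Big]}\\
&= 2\sqrt{\sum_{j=1}^T \mathrm{Var}(M_j)}
= 2\sqrt{T\big(1-\tfrac1T\big)}\ \le\ 2\sqrt{T},
\end{align*}
where $r_j$ are i.i.d.\ Rademacher variables, we used $\E[\eta_j^2]=\mathrm{Var}(M_j)$ since $\E[M_j]=1$, and $M_j\sim\mathrm{Binomial}(T,1/T)$. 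The constant sharpens from the $2\sqrt{2T}$ of Proposition~\ref{prop_queue} to $2\sqrt{T}$ precisely because half the tokens are deterministic: the exchangeable sequence now has length $T$ with $\sum_j\E[\eta_j^2]\approx T$, rather than length $2T$.

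I expect the main obstacle to be the rigorous treatment of the quantile transform for a general (possibly atomic or flat) distribution $F$ in the second step---specifically, justifying the inclusion $\{U_t\le k/T\}\subseteq\{X_t\le F^{-1}(k/T)\}$ and confirming that the threshold maximum defining $Q_{2T}$ is indeed dominated by the cell-count partial sums, so that the overflow computed with the true comparisons $X_t\le F^{-1}(i/T)$ never exceeds the uniform-space quantity. The remaining ingredients (exchangeability of the multinomial counts, the queue-to-partial-sum identity, and the Rademacher/variance estimate) are routine given the tools already developed.
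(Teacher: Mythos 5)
Your proposal is correct and follows essentially the same route as the paper: overflow the i.i.d.\ sample into the optimal packing of the quantile set, reduce the number of extra bins to the terminal value of a queue driven by the exchangeable multinomial cell counts (your $\eta_j = 1-M_j$ are the paper's $1-\psi_t$, with $\E[\eta_j^2]=1-\tfrac1T$), and apply Proposition~\ref{prop:exchange} plus Cauchy--Schwarz to get $2\sqrt{T}$. Your probability-integral-transform coupling $X_t=F^{-1}(U_t)$ together with the one-sided domination $Q_{2T}\le\max_{0\le i\le T}\sum_{j=1}^i\eta_j$ is a marginally cleaner treatment of general (atomic or flat) $F$ than the paper's pairwise stochastic-dominance remark, but it is the same argument in substance.
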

\begin{proof}
For simplicity, we prove the result for a continuous distribution $F$ and will explain afterwards on how the proof works for a general distribution. We make a ``constructive'' proof through packing $\mathcal{X}_T$ into the optimal solution of the right-hand-side. Specifically, we mimic Line 6-21 of Algorithm \ref{alg:overflow} and solve an offline problem with item sizes $\left\{F^{-1}\left(\frac{1}T\right), \dots, F^{-1}\left(\frac{T}T\right)\right\}$. Then, with the same routine of the algorithm, we pack $\mathcal{X}_T$ into the optimal offline solution generated from $\mathrm{OPT}\left(\left\{F^{-1}\left(\frac{1}T\right),F^{-1}\left(\frac{2}T\right), \dots, F^{-1}\left(\frac{T}T\right)\right\}\right).$

The dynamic in Section \ref{sec_warm_up} then becomes
$$ Q_{0} = 0, \ \ Q_{t} =  \max\{Q_{t-1} + \psi_{t} - 1, 0\},$$
where
$$ (\psi_{1},\dots,\psi_{T}) \sim \mathrm{Multinomial}\left(T, \left(\frac{1}{T},\dots,\frac{1}{T}\right)\right). $$
In fact, this can be seen from the fact that
when the underlying distribution is continuous, we have $$\prob\left(X_t\in \left(\frac{t'-1}{T}, \frac{t'}{T}\right)\right) = \frac{1}{T}$$
for all $t$ and $t'$. Thus, the items are multinomially distributed among all the intervals generated by the quantiles
$\left\{\left(0,F^{-1}\left(\frac{1}T\right)\right], \dots, \left(F^{-1}\left(\frac{T-1}T\right), F^{-1}\left(\frac{T}T\right)\right]\right\}.$
At the end of the $t$-th interval, there is a server of size $F^{-1}\left(\frac{t}{T}\right)$, which serves at most $1$ customer.

As in Section \ref{sec_warm_up}, we can construct a new sequence by discarding the non-negativity part,
$$S_{0} \coloneqq 0, \ \  S_{t} \coloneqq S_{t-1} + \psi_{t} - 1.$$
Then the queue length $Q_{T}$ is given by
$$Q_{T} = -\min_{t\leq T}\left\{S_t\right\}.$$
Note that all $\psi_t - 1$ s are exchangeable, since their corresponding probabilities are all equal.\\
Furthermore, 
$$\sum_{t=1}^{T} (\psi_t - 1) = T - T = 0.$$
Thus by Proposition \ref{prop:exchange},
\begin{align*}
    \E[Q_T] & = \E\left[\left|\min_{t \leq T} S_t\right|\right]\\
    & \leq 2 \E\left[\left|\sum_{t=1}^T (\psi_t - 1) \cdot r_t\right|\right]\\
    & \leq 2 \left(\E\left[\left|\sum_{t=1}^T (\psi_t - 1)^2 r_t^2\right|\right]\right)^{\frac{1}2} \\
    & = 2 \left(\sum_{t=1}^T \left(1 - \frac{1}{T}\right)\right)^{\frac{1}2} \leq 2\sqrt{T},
\end{align*}
where $r_t$'s are Rademacher random variables independent of $\psi_t$'s.

Now we argue how the analysis also works for a general distribution $F$. First, note that the only place we utilize the continuous property of the distribution $F$ is that $\prob\left(X_t\in \left(\frac{t'-1}{T}, \frac{t'}{T}\right)\right) = \frac{1}{T}$
for all $t$ and $t'$. For a general distribution, this is violated, but in a favorable manner (that the probability of an item capable of being packed to each corresponding slot is larger than $1/T$). Specifically, if we construct the queue in the same way as above and denote the corresponding increments as $(\psi_{1}',\dots,\psi_{T}')$.
By the definition of quantile, we know 
$$(\psi_{1}',\dots,\psi_{T}') 	\succeq (\psi_{1},\dots,\psi_{T}) $$
where $\succeq$ denotes a pairwise stochastic dominance. Therefore the expected queue length $Q_T$ of a continuous distribution will be an upper bound for that of a general distribution. Thus we complete the proof.
\end{proof}

Also, the inequality is tight in terms of the order on $T$. To see this, let the distribution $F$ be $ \prob(X = 1/2 + \epsilon) = \prob(X = 1/2 - \epsilon) = 1/2$ for some $\epsilon\in(0,1/2)$. Then $ \mathrm{CE}(F) = 1/2.$ However,
\begin{align*}
    \E[\mathrm{OPT}(\mathcal{X})] = & \frac{T}2   + \frac12 \E\left[\max\left\{ \sum_{t=1}^T r_t, 0\right\}\right]\\
    = & \frac{T}2    + \Omega(\sqrt{T}),
\end{align*}
where $r_t$'s are independent Rademacher random variables.

The following corollary establishes the relation between $\text{OPT}(\mathcal{X}_T)$ and $\text{CE}(F)$ using Proposition \ref{prop:certainty}. In its proof, it uses the relaxed LP benchmark OPT$_f$ as a proxy for that the convexity property of the relaxed problem better relates optimal objective values with respect to different $T$'s. 

\begin{corollary}
[CE as upper bound of OPT]
\label{coro:ce}
Suppose $\mathcal{X}_T=\{X_1,\dots,X_T\}$ are i.i.d. samples from distribution $F$.
Then
$$ \sup_{\mathrm{supp}(F)\subset[0,1]}\{ \E[\mathrm{OPT}(\mathcal{X}_T)] - T\cdot \mathrm{CE}(F)\} = \Theta (\sqrt{T}). $$
\end{corollary}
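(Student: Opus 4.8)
The plan is to prove the two directions of the $\Theta(\sqrt T)$ separately. The lower bound $\Omega(\sqrt T)$ is already in hand: the two-point distribution displayed just above the statement has $\mathrm{CE}(F)=1/2$ while $\E[\mathrm{OPT}(\mathcal X_T)]=T/2+\Omega(\sqrt T)$, so a single witness $F$ forces the supremum to be at least $\Omega(\sqrt T)$. The real work is the uniform upper bound $\E[\mathrm{OPT}(\mathcal X_T)]-T\cdot\mathrm{CE}(F)\le O(\sqrt T)$ for every $F$. Write $Q_T=\{F^{-1}(1/T),\dots,F^{-1}(T/T)\}$, $G(T)=\mathrm{OPT}(Q_T)$, $G_f(T)=\mathrm{OPT}_f(Q_T)$, and $a_T=G_f(T)/T$. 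Proposition \ref{prop:certainty} reduces the task to the deterministic quantile problem, since $\E[\mathrm{OPT}(\mathcal X_T)]\le G(T)+2\sqrt T$; and Proposition \ref{prop:relaxedgap} lets me replace $G$ by its relaxation at a cost of only $O(\log^2 T)$, i.e. $G(T)\le G_f(T)+O(\log^2 T)$. Because this gap is $o(T)$, the benchmark is unchanged under relaxation: $\mathrm{CE}(F)=\limsup_T a_T$. So it suffices to prove $G_f(T)\le T\cdot\mathrm{CE}(F)+O(1)$, equivalently $a_T-\mathrm{CE}(F)=O(1/T)$.

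The core of the argument exploits that $\mathrm{OPT}_f(\cdot)$, viewed as a function of the item-count vector, is convex and positively homogeneous, hence subadditive over disjoint unions of item multisets. The second ingredient is a refinement identity for quantile sets: for integers $M,T$ one can partition $Q_{MT}=\bigsqcup_{r=1}^M P_r$ into $M$ blocks of size $T$, where block $P_M$ is exactly $Q_T$ and every other block $P_r$ is pairwise dominated by $Q_T$ (its $k$-th smallest element is $F^{-1}$ of a smaller level than the $k$-th element of $Q_T$). Subadditivity together with the monotonicity $\mathrm{OPT}_f(P_r)\le\mathrm{OPT}_f(Q_T)$ gives the easy direction $G_f(MT)\le M\,G_f(T)$, so $a_{MT}\le a_T$; in particular $\ell_T:=\lim_{M\to\infty}a_{MT}$ exists and, being a subsequential limit of $\{a_S\}$, satisfies $\ell_T\le\limsup_S a_S=\mathrm{CE}(F)$. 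The matching lower bound $G_f(MT)\ge M\,G_f(T)-(M-1)$ is what pins down the rate. I would obtain it by LP duality: take a \emph{monotone} optimal dual $y^\ast$ for $Q_T$ — monotone in item size is without loss because replacing $y^\ast$ by its running maximum stays dual-feasible (any configuration of larger items maps to a configuration of smaller items) and does not decrease the objective — and note $0\le y^\ast\le 1$. Extending $y^\ast$ monotonically to the finer grid of $Q_{MT}$ keeps it dual-feasible there, and evaluating $\sum_{x\in Q_{MT}}y^\ast(x)$ block by block yields $G_f(T)+(M-1)\sum_{k=0}^{T-1}y^\ast(F^{-1}(k/T))$; since consecutive increments of the nondecreasing, $[0,1]$-valued sequence $y^\ast(F^{-1}(k/T))$ telescope to at most $1$, the last sum is at least $G_f(T)-1$. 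Hence $M\,G_f(T)-G_f(MT)\le M-1$, so $a_T-a_{MT}\le(M-1)/(MT)<1/T$, and letting $M\to\infty$ gives $a_T-\mathrm{CE}(F)\le a_T-\ell_T\le 1/T$.

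Assembling the pieces, $G_f(T)\le T\cdot\mathrm{CE}(F)+1$, whence $G(T)\le T\cdot\mathrm{CE}(F)+O(\log^2 T)$ and finally $\E[\mathrm{OPT}(\mathcal X_T)]\le T\cdot\mathrm{CE}(F)+2\sqrt T+O(\log^2 T)=T\cdot\mathrm{CE}(F)+O(\sqrt T)$, uniformly in $F$. Combined with the $\Omega(\sqrt T)$ witness this gives $\sup_F\{\E[\mathrm{OPT}(\mathcal X_T)]-T\cdot\mathrm{CE}(F)\}=\Theta(\sqrt T)$. It is worth noting that the deterministic quantile gap $G(T)-T\cdot\mathrm{CE}(F)$ is only $O(\log^2 T)$; the entire $\sqrt T$ order of the corollary comes from the stochastic fluctuation captured by the $2\sqrt T$ slack of Proposition \ref{prop:certainty}, consistent with the two-point example where $G(T)=T/2$ exactly.

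The main obstacle I anticipate is the lower bound $G_f(MT)\ge M\,G_f(T)-O(M)$, and specifically justifying the two facts the telescoping rests on: that an optimal dual can be taken nondecreasing in item size, and that it admits a feasibility-preserving monotone extension from the grid of $Q_T$ to the finer grid of $Q_{MT}$. Everything else (subadditivity, homogeneity, the refinement partition, and the reductions via Propositions \ref{prop:certainty} and \ref{prop:relaxedgap}) is routine, and the bound $a_T-\mathrm{CE}(F)\le 1/T$ follows immediately once the per-scale loss is capped at $O(1)$ rather than the trivial $O(a_T)$.
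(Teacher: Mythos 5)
Your proposal is correct in substance — it reaches the same $O(\sqrt T)$ upper bound via Propositions \ref{prop:certainty} and \ref{prop:relaxedgap} and uses the same two-point witness for the $\Omega(\sqrt T)$ direction — but the key step, showing $\mathrm{OPT}_f(Q_T)\le T\cdot\mathrm{CE}(F)+O(1)$, is handled by a genuinely different mechanism than the paper's. The paper works with the \emph{lower-indexed} quantiles $\{F^{-1}(k/T)\}_{k=0}^{T-1}$: by positive homogeneity $\mathrm{OPT}_f(Q)=\tfrac12\mathrm{OPT}_f(2Q)$, and the duplicated coarse set $2\{F^{-1}(k/T)\}_{k=0}^{T-1}$ is \emph{elementwise dominated} by the finer set $\{F^{-1}(j/(2T))\}_{j=0}^{2T-1}$, so the normalized values increase along the dyadic refinement up to $\mathrm{CE}(F)$ and the conclusion is immediate — no duality, no superadditivity. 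Your choice of the upper-indexed quantiles reverses the domination (each block of the refinement is dominated \emph{by} $Q_T$), so subadditivity only yields $a_{MT}\le a_T$, which is the useless direction, and you must recover $G_f(MT)\ge M\,G_f(T)-(M-1)$ via the monotone-dual telescoping. That argument does go through: the running maximum of an optimal dual stays feasible because any configuration maps to a configuration of smaller items, and the extension to the finer grid must be the \emph{round-down} one (assign each fine quantile the dual weight of the largest coarse quantile not exceeding it) — a round-up extension would map configurations to multisets of larger items that may overflow the bin, so its feasibility is not guaranteed. With round-down, your block-by-block evaluation is exactly right and the telescoping loses at most $y^*(F^{-1}(1))\le 1$; also note you only need $\limsup_M a_{MT}\le\mathrm{CE}(F)$, not existence of the limit $\ell_T$. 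What your route buys is a quantitative near-superadditivity statement ($a_T-a_{MT}\le 1/T$) that the paper never needs; what it costs is the LP-duality machinery and the two lemmas you flag. Had you run the same refinement with the $k=0,\dots,T-1$ indexing (the two versions of $\mathrm{CE}$ differ by at most $1/T$ by the paper's first lemma), the second half of your argument would collapse to the paper's one-line domination.
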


\begin{proof}
By Proposition \ref{prop:relaxedgap},
$$ \mathrm{OPT}_f(\mathcal{X}_T) \leq \mathrm{OPT}(\mathcal{X}_T) \leq \mathrm{OPT}_f(\mathcal{X}_T) + 4 \log_2^2 T + 17 \log_2 T + 11.$$

Thus,
\begin{align*}
    \E[\mathrm{OPT}(\mathcal{X}_T)] & \leq  \mathrm{OPT}\left(\left\{F^{-1}\left(\frac{0}T\right),\dots,F^{-1}\left(\frac{T-1}T\right)\right\}\right) + 2\sqrt{2T} + 1\\
    & \leq  \mathrm{OPT}_f\left(\left\{F^{-1}\left(\frac{0}T\right),\dots,F^{-1}\left(\frac{T-1}T\right)\right\}\right) + 2\sqrt{2T} + 4 \log_2^2 T + 17 \log_2 T + 12\\
    & =  \frac12 \mathrm{OPT}_f\left(2\left\{F^{-1}\left(\frac{0}T\right),\dots,F^{-1}\left(\frac{T-1}T\right)\right\}\right) + 2\sqrt{2T} + 4 \log_2^2 T + 17 \log_2 T + 12\\
    & \leq \frac12 \mathrm{OPT}_f\left(\left\{F^{-1}\left(\frac{0}{2T}\right),\dots,F^{-1}\left(\frac{2T-1}{2T}\right)\right\}\right) + 2\sqrt{2T} + 4 \log_2^2 T + 17 \log_2 T + 12
\end{align*}
where the first line is from Proposition \ref{prop:certainty}, the second line is from Proposition \ref{prop:relaxedgap}, and the third and fourth lines are simply from playing with the feasibility and the constraints of the relaxed LP problem $\text{OPT}_f.$

By repeating the procedure for the first term in above, it implies that
\begin{align*}
\E[\mathrm{OPT}(\mathcal{X}_T)] & \le T\cdot \limsup_{k\rightarrow \infty} \mathrm{OPT}_f\left(\left\{F^{-1}\left(\frac{0}{2^kT}\right),\dots,F^{-1}\left(\frac{2^kT-1}{2^kT}\right)\right\}\right) \\ 
& \ \ \ \ + 2\sqrt{2T} + 4 \log_2^2 T + 17 \log_2 T + 12 \\ & \leq T\cdot \mathrm{CE}(F) + 2\sqrt{2T} + 4 \log_2^2 T + 17 \log_2 T + 12.
\end{align*}
\end{proof}

The following proposition establishes the $\text{CE}(F)$ as a lower bound for the OPT. Putting together Corollary \ref{coro:ce} with Proposition \ref{prop:celower}, we conclude that $T\cdot \mathrm{CE}(F)$ well approximates $\E[\text{OPT}(\mathcal{X}_{T})]$ with a gap of $\Theta(\sqrt{T}).$

\begin{proposition}[CE as lower bound of OPT]
\label{prop:celower}
For $\mathcal{X}_T = \{X_1,\dots,X_T\}$ i.i.d. sampled from $F$, we have
$$ T\cdot \mathrm{CE}(F) - \E[\mathrm{OPT}(\mathcal{X}_T)] \le 4 \log_2^2 T + 17 \log_2 T + 12. $$
\end{proposition}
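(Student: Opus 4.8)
The plan is to prove the matching lower bound by building a fractional ``price'' (dual) certificate from the deterministic quantile problem and transporting it to the random instance through weak LP duality. This mirrors the constructive spirit of Proposition \ref{prop:certainty} but runs the comparison in the opposite direction. Concretely, I would fix an auxiliary scale $T'$ and consider the relaxed quantile LP whose value is $\mathrm{OPT}_f(\{F^{-1}(0/T'),\dots,F^{-1}((T'-1)/T')\})$. Let $y^\star=(y^\star_0,\dots,y^\star_{T'-1})$ be an optimal dual solution (one price per quantile); by LP strong duality $\sum_i y^\star_i$ equals this optimal value, while dual feasibility says that every configuration of quantile sizes, i.e. every multiset summing to at most $1$, carries total price at most $1$. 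I would symmetrize $y^\star$ so that equal sizes receive equal prices, which is harmless for the LP.

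Next I extend $y^\star$ to a price function $\bar y$ on $[0,1]$ by rounding each size down to its nearest quantile, setting $\bar y(s)=y^\star_i$ for $s\in[F^{-1}(i/T'),F^{-1}((i+1)/T'))$. Rounding down only shrinks sizes, so any bin-feasible multiset of genuine item sizes maps to a bin-feasible multiset of quantiles; hence $\bar y$ is dual-feasible for the random instance $\mathcal{X}_T$, and weak duality gives the pointwise bound $\mathrm{OPT}_f(\mathcal{X}_T)\ge\sum_{t=1}^T \bar y(X_t)$. Taking expectations and writing $X_t=F^{-1}(U_t)$ with $U_t$ uniform on $[0,1]$, the per-item price integrates to $\int_0^1 \bar y(F^{-1}(u))\,du=\frac1{T'}\sum_{i=0}^{T'-1}y^\star_i$, so that $\E[\mathrm{OPT}_f(\mathcal{X}_T)]\ge\frac{T}{T'}\,\mathrm{OPT}_f(\{F^{-1}(0/T'),\dots,F^{-1}((T'-1)/T')\})$ holds for every $T'$.

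Finally I would let $T'\to\infty$. Using Proposition \ref{prop:relaxedgap} to pass from $\mathrm{OPT}_f$ to $\mathrm{OPT}$ on the quantile side (the integrality gap is $O(\log^2 T')$ and disappears after dividing by $T'$), together with the monotonicity of $a_{T'}=\frac1{T'}\mathrm{OPT}_f(\{F^{-1}(0/T'),\dots\})$ already isolated inside Corollary \ref{coro:ce}, the right-hand side tends to $T\cdot\mathrm{CE}(F)$; combined with $\E[\mathrm{OPT}(\mathcal{X}_T)]\ge\E[\mathrm{OPT}_f(\mathcal{X}_T)]$ this delivers $T\cdot\mathrm{CE}(F)\le\E[\mathrm{OPT}(\mathcal{X}_T)]$. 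The stated slack $4\log_2^2 T+17\log_2 T+12$ then follows a fortiori; a logarithmic loss is incurred only if one prefers to argue at the finite scale $T$ via Proposition \ref{prop:relaxedgap} rather than passing cleanly to the limit.

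I expect the main obstacle to be a general, non-continuous $F$. When $F$ has atoms or its inverse has flat stretches, quantiles tie and the half-open intervals $[F^{-1}(i/T'),F^{-1}((i+1)/T'))$ can collapse, so the clean identity $\E[\bar y(X_t)]=\frac1{T'}\sum_i y^\star_i$ must be re-established, most safely by the same stochastic-dominance reduction to the continuous case used at the end of the proof of Proposition \ref{prop:certainty}. A secondary point requiring care is verifying that the limit of the fractional quantile averages is exactly $\mathrm{CE}(F)$, which is precisely where the monotonicity groundwork from Corollary \ref{coro:ce} and the vanishing integrality gap must be combined.
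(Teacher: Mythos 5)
Your proposal is correct in its essentials but takes a genuinely different route from the paper. The paper works entirely on the primal side and at the single scale $T$: it introduces the rounded-down distribution $\underline{F}_T$ supported on the quantiles, uses convexity of the LP value in the right-hand-side vector (Jensen) to get $\mathrm{OPT}_f(\{F^{-1}(0/T),\dots\}) \le \E[\mathrm{OPT}_f(\underline{\mathcal{X}}_T)]$, a coupling argument ($\underline{F}_T \ge F$ pointwise, hence $\underline{X}_t \le X_t$ almost surely under a suitable coupling) to pass to $\E[\mathrm{OPT}_f(\mathcal{X}_T)]$, and then a separate subadditive scaling argument ($T'=mT+r$, let $T'\to\infty$) to replace the scale-$T$ quantile value by $T\cdot\mathrm{CE}(F)$; the $4\log_2^2 T+17\log_2 T+12$ slack is exactly the integrality gap paid at scale $T$. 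You instead transport a dual certificate: prices from the relaxed quantile LP at an independent scale $T'$, extended to $[0,1]$ by rounding sizes down, remain dual-feasible for the random instance, and weak duality plus taking expectations gives $\E[\mathrm{OPT}_f(\mathcal{X}_T)]\ge \tfrac{T}{T'}\mathrm{OPT}_f(\{F^{-1}(0/T'),\dots\})$ for every $T'$. Letting $T'\to\infty$ you obtain the clean inequality $T\cdot\mathrm{CE}(F)\le\E[\mathrm{OPT}(\mathcal{X}_T)]$ with no additive slack at all, which is strictly stronger than the stated proposition (and consistent with what superadditivity of $\E[\mathrm{OPT}]$ would give once $\mathrm{CE}=\mathrm{AS}$ is known). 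Your approach buys a sharper constant and avoids the two-step detour through $\underline{F}_T$ and the $mT+r$ scaling; the paper's approach avoids any appeal to duality and keeps everything at one scale.

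One detail in your argument needs to be nailed down for general $F$, and stochastic dominance alone will not do it. For non-continuous $F$ the index $I(X_t)$ to which $X_t$ rounds down stochastically dominates a uniform index on $\{0,\dots,T'-1\}$, but to conclude $\E[y^\star_{I(X_t)}]\ge\frac{1}{T'}\sum_i y^\star_i$ from this you need the prices $y^\star_i$ to be non-decreasing in the quantile size, which is not automatic for an arbitrary optimal dual (symmetrizing over equal sizes is not enough). The fix is standard and easy: replace $y^\star_j$ by $\max\{y^\star_i : s_i\le s_j\}$; this preserves dual feasibility (shrinking the items of a configuration keeps it a configuration) and weakly increases the dual objective, so without loss of generality the optimal dual is monotone in size. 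With that observation your stochastic-dominance step closes the general case, and the rest of the argument is sound.
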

\begin{proof}
For distribution $F$, we define the following distribution through its cumulative density function
$$\underline{F}_T(x) = \sum_{t=1}^T I\left(F^{-1}\left(\frac{t-1}{T}\right)\le x\right)$$
for $x\in[0,1].$ Here $I(\cdot)$ is the indicator function. To make some intuition, when $F$ is a continuous distribution, $\underline{F}_T$ is a uniform distribution on the set $\{F^{-1}(0/T),...,F^{-1}(T-1/T)\}.$

Let $\underline{\mathcal{X}}_T = \{\underline{X}_1,...,\underline{X}_T\}$ be a set of i.i.d. samples from the distribution $\underline{F}_T$. 

Then we have
\begin{align*}
    & \frac1T \left[\mathrm{OPT}\left(\left\{F^{-1}\left(\frac{0}T\right),\dots,F^{-1}\left(\frac{T-1}T\right)\right\}\right) - (4\log_2^2 T + 17 \log_2 T + 11)\right]\\
    \leq & \frac1T \mathrm{OPT}_f\left(\left\{F^{-1}\left(\frac{0}T\right),\dots,F^{-1}\left(\frac{T-1}T\right)\right\}\right)\\
    \leq & \E\left[\frac1T \mathrm{OPT}_f(\underline{\mathcal{X}}_T)\right]\\
    \leq & \E\left[\frac1T \mathrm{OPT}_f(\mathcal{X}_T)\right]\\
    \leq & \E\left[\frac1T \mathrm{OPT}(\mathcal{X}_T)\right].
\end{align*}
Here the first inequality comes from Proposition \ref{prop:relaxedgap}, the second inequality comes from the convexity of linear programs, and the last inequality comes from the relaxation of the integer LP.

Note the fact that $\underline{F}_T(x) \ge F(x)$ for all $x\in[0,1]$. The third inequality can be proved by a coupling argument for the two distributions $\underline{F}_T$ and $F$ such that $\prob(\underline{\mathcal{X}}_T\le \mathcal{X}_T)=1$ where the inequality holds element-wise.

Therefore, we have
$$ \E[\mathrm{OPT}(\mathcal{X})]+4\log_2^2 T + 17 \log_2 T + 11 \geq \mathrm{OPT}\left(\left\{F^{-1}\left(\frac{0}T\right), \dots, F^{-1}\left(\frac{T-1}T\right)\right\}\right). $$
Thus, for any 
$T^\prime = mT + r\ (0\leq r < T)$,
\begin{align*}
    &\E[\mathrm{OPT}(\mathcal{X})]+4\log_2^2 T + 17 \log_2 T + 11 \\
    \geq & \mathrm{OPT}\left(\left\{F^{-1}\left(\frac{0}T\right), \dots, F^{-1}\left(\frac{T-1}T\right)\right\}\right)\\
    \ge & \mathrm{OPT}\left(\left\{F^{-1}\left(\frac{1}T\right), \dots, F^{-1}\left(\frac{T}T\right)\right\}\right) - 1\\
    \geq & \frac{1}{m} \mathrm{OPT}\left(m\left\{F^{-1}\left(\frac{1}T\right), \dots, F^{-1}\left(\frac{T}T\right)\right\}\right) - 1\\
    \geq & \frac{1}{m} \mathrm{OPT}\left(\left\{F^{-1}\left(\frac{1}{mT}\right), \dots, F^{-1}\left(\frac{mT}{mT}\right)\right\}\right) - 1\\
    \geq & \frac{1}{m} \mathrm{OPT}\left(\left\{F^{-1}\left(\frac{1}{mT+r}\right), \dots, F^{-1}\left(\frac{mT}{mT+r}\right)\right\}\right) - 1\\
    \geq & \frac{1}{m} \left[\mathrm{OPT}\left(\left\{F^{-1}\left(\frac{1}{mT+r}\right), \dots, F^{-1}\left(\frac{mT+r}{mT+r}\right)\right\}\right) - r\right] - 1\\
    \geq & \frac{T}{T^\prime}\left[\mathrm{OPT}\left(\left\{F^{-1}\left(\frac{1}{T^\prime}\right), \dots, F^{-1}\left(\frac{T^\prime}{T^\prime}\right)\right\}\right) - r\right] - 1.
\end{align*}
Taking limit with respect to $T^\prime$, we complete the proof.

\end{proof}

\subsection{Relation with the Asymptotic Benchmark}

The quantile-based CE benchmark discussed in the last section works as a proper approximation for the offline optimal objective. We now establish an equivalence between the CE benchmark with the more conventional asymptotic benchmark of the bin packing problem.

\begin{definition}[Asymptotic Benchmark]
For any distribution $F$ supported on $[0,1]$, we define the asymptotic performance benchmark as follows:
$$ \mathrm{AS}(F) \coloneqq \limsup_{T\rightarrow \infty} \E\left[\frac1T \mathrm{OPT}(\mathcal{X}_T)\right]$$
where $\mathcal{X}_T=\{X_1,...,X_T\}$ is a set of i.i.d. samples from $F.$
\end{definition}
\begin{proposition}
\label{prop:CEAS}
We have 
$$ \mathrm{CE}(F) = \mathrm{AS}(F). $$
Moreover, if 
$\lim_{T\rightarrow \infty} \E\left[\frac1T \mathrm{OPT}(\mathcal{X}_T)\right]$
exists, then the limit superior in both definitions of $\mathrm{CE}(F)$ and $\mathrm{AS}(F)$ can be replaced by limit.
\end{proposition}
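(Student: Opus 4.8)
The plan is to prove the equality $\mathrm{CE}(F)=\mathrm{AS}(F)$ by sandwiching $\frac1T\E[\mathrm{OPT}(\mathcal{X}_T)]$ between two expressions that both converge to $\mathrm{CE}(F)$, and then to upgrade $\limsup$ to $\lim$ by showing that the deterministic quantile sequence and the stochastic sequence stay within $o(T)$ of each other.

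First I would establish $\mathrm{AS}(F)\le\mathrm{CE}(F)$. Corollary \ref{coro:ce} gives $\E[\mathrm{OPT}(\mathcal{X}_T)]\le T\cdot\mathrm{CE}(F)+2\sqrt{2T}+4\log_2^2T+17\log_2T+12$; dividing by $T$ and taking $\limsup_{T\to\infty}$ makes the trailing terms vanish and yields $\mathrm{AS}(F)\le\mathrm{CE}(F)$. For the reverse direction, Proposition \ref{prop:celower} gives $T\cdot\mathrm{CE}(F)-\E[\mathrm{OPT}(\mathcal{X}_T)]\le 4\log_2^2T+17\log_2T+12$, so that $\mathrm{CE}(F)\le\frac1T\E[\mathrm{OPT}(\mathcal{X}_T)]+o(1)$; taking $\limsup$ of the right-hand side over $T$ gives $\mathrm{CE}(F)\le\mathrm{AS}(F)$. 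Combining the two inequalities proves the first claim.

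For the ``moreover'' part, the key observation is that the two sequences $a_T\coloneqq\frac1T\mathrm{OPT}(\{F^{-1}(0/T),\dots,F^{-1}((T-1)/T)\})$ and $b_T\coloneqq\frac1T\E[\mathrm{OPT}(\mathcal{X}_T)]$ differ by a vanishing amount. Indeed, chaining Proposition \ref{prop:certainty} with the lemma comparing the two quantile sets bounds $b_T$ above by $a_T+O(1/\sqrt{T})$, while the intermediate inequality inside the proof of Proposition \ref{prop:celower} bounds $a_T$ above by $b_T+O(\log^2T/T)$; hence $|a_T-b_T|\to 0$. Consequently, if $\lim_{T\to\infty}b_T$ exists then $\lim_{T\to\infty}a_T$ exists and equals the same value, so the $\limsup$ in the definition of $\mathrm{CE}(F)$ may be replaced by $\lim$, and the replacement for $\mathrm{AS}(F)$ is immediate from the hypothesis.

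I expect the main subtlety to be justifying that the hypothesis is in fact unconditional, i.e. that $\lim_T b_T$ always exists. This follows from Fekete's subadditive lemma: since packing two item sets separately is feasible, $\mathrm{OPT}(\mathcal{X}\cup\mathcal{Y})\le\mathrm{OPT}(\mathcal{X})+\mathrm{OPT}(\mathcal{Y})$ holds pointwise, and taking expectations over independent blocks gives $\E[\mathrm{OPT}(\mathcal{X}_{m+n})]\le\E[\mathrm{OPT}(\mathcal{X}_m)]+\E[\mathrm{OPT}(\mathcal{X}_n)]$, so $T\mapsto\E[\mathrm{OPT}(\mathcal{X}_T)]$ is subadditive and $b_T$ converges to $\inf_T b_T$. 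The care needed is to keep the $\limsup/\liminf$ bookkeeping clean when transferring convergence from $b_T$ to $a_T$ through the non-uniform error terms, and to note that subadditivity only controls the expectation, which is precisely the quantity appearing in $\mathrm{AS}(F)$.
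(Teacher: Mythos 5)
Your proof is correct and follows essentially the same route as the paper: both sandwich $\frac1T\E[\mathrm{OPT}(\mathcal{X}_T)]$ between the quantile problem value via Proposition \ref{prop:certainty}/Corollary \ref{coro:ce} on one side and the intermediate inequality of Proposition \ref{prop:celower} on the other, then pass to the limit superior. Your closing remark that the limit of $\frac1T\E[\mathrm{OPT}(\mathcal{X}_T)]$ always exists by subadditivity of $\E[\mathrm{OPT}(\cdot)]$ is also exactly what the paper observes immediately after the proposition (citing Kingman's subadditive argument), so nothing is missing.
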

\begin{proof}
From Corollary \ref{coro:ce} and Proposition \ref{prop:celower}, we know
\begin{align*}
    & \frac1T \left[\mathrm{OPT}\left(\left\{F^{-1}\left(\frac{0}T\right),\dots,F^{-1}\left(\frac{T-1}T\right)\right\}\right) - (4\log_2^2 T + 17 \log_2 T + 11)\right]\\
    \leq & \E\left[\frac1T \mathrm{OPT}(\mathcal{X}_T)\right]\\
    \leq & \frac1T \left[\mathrm{OPT}\left(\left\{F^{-1}\left(\frac{1}T\right),\dots,F^{-1}\left(\frac{T}T\right)\right\}\right) + 2\sqrt{T}\right].
\end{align*}

The proof can be completed by taking limit superior (or limit) on both sides of the inequality.
\end{proof}
Note for two sets $\mathcal{X}, \mathcal{X}'$ of items, 
$$\mathrm{OPT}(\mathcal{X}) + \mathrm{OPT}(\mathcal{X}') \geq \mathrm{OPT}(\mathcal{X} \cup \mathcal{X}').$$
We can then employ the subadditive process argument in \cite{kingman1976subadditive} and validates the existence of the limit $\E[\mathrm{OPT}(\mathcal{X}_{T})/T]$ (See \citep{rhee1988optimal1}). Thus we can remove all the limit superior in this section and conclude with the following proposition.

\begin{proposition} We have
$$ \lim_{T\rightarrow \infty} \frac{1}T \mathrm{OPT}\left(\left\{F^{-1}\left(\frac{1}T\right),\dots,F^{-1}\left(\frac{T}T\right)\right\}\right) $$
exists and it is equal to
$$ \lim_{T \rightarrow \infty} \frac{1}{T} \E\left[\mathrm{OPT}(\mathcal{X}_{T})\right]. $$
\end{proposition}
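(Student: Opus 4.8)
The plan is to split the claim into two parts: first, establish that the limit $\lim_{T\to\infty}\frac{1}{T}\E[\mathrm{OPT}(\mathcal{X}_T)]$ genuinely exists, and second, transfer that existence together with the already-established identity $\mathrm{CE}(F)=\mathrm{AS}(F)$ to the deterministic quantile sequence via Proposition \ref{prop:CEAS}.

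First I would show that the sequence $a_T \coloneqq \E[\mathrm{OPT}(\mathcal{X}_T)]$ is subadditive. Let $\mathcal{X}_m$ and $\mathcal{X}'_n$ be independent i.i.d. draws from $F$; since concatenating two independent i.i.d. samples yields an i.i.d. sample of the combined size, we have $\mathcal{X}_m \cup \mathcal{X}'_n \overset{\mathcal{D}}{=} \mathcal{X}_{m+n}$. Combining this with the superadditivity inequality $\mathrm{OPT}(\mathcal{X}_m)+\mathrm{OPT}(\mathcal{X}'_n)\ge \mathrm{OPT}(\mathcal{X}_m\cup\mathcal{X}'_n)$ noted just before the statement and taking expectations gives $a_{m+n}\le a_m+a_n$. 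Because $0\le a_T\le T$, Fekete's lemma applies and yields that $\lim_{T\to\infty} a_T/T$ exists and equals $\inf_T a_T/T$. This is precisely the subadditive-process argument attributed to \cite{kingman1976subadditive} and used in \cite{rhee1988optimal1}; for the present expectation-level statement the deterministic Fekete version already suffices.

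Second, with the existence of $\lim_{T\to\infty}\frac{1}{T}\E[\mathrm{OPT}(\mathcal{X}_T)]$ in hand, I would invoke the second half of Proposition \ref{prop:CEAS}: under this existence, the limit superior in both definitions of $\mathrm{CE}(F)$ and $\mathrm{AS}(F)$ may be replaced by limits. In particular, the deterministic quantile quantity $\frac{1}{T}\mathrm{OPT}(\{F^{-1}(1/T),\dots,F^{-1}(T/T)\})$ converges, its limit being $\mathrm{CE}(F)$, which is exactly the existence part of the claim. The equality $\mathrm{CE}(F)=\mathrm{AS}(F)$ from Proposition \ref{prop:CEAS}, now read as an identity between two genuine limits, then identifies this limit with $\lim_{T\to\infty}\frac{1}{T}\E[\mathrm{OPT}(\mathcal{X}_T)]$, completing the argument.

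The only step requiring real care — and the closest thing to an obstacle — is the subadditivity verification: one must ensure that the union of two independent i.i.d. samples is distributionally identical to a single i.i.d. sample of the combined size (correct once item sizes are viewed as a multiset, so that ties introduce no ambiguity) and that the stated inequality is applied in the right direction to produce $a_{m+n}\le a_m+a_n$ rather than the reverse. Everything downstream is a direct appeal to the sandwich inequality already derived inside Proposition \ref{prop:CEAS} from Corollary \ref{coro:ce} and Proposition \ref{prop:celower}, so no new estimates are needed.
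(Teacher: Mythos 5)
Your proposal is correct and follows essentially the same route as the paper: the paper likewise derives existence of $\lim_{T\to\infty}\E[\mathrm{OPT}(\mathcal{X}_T)]/T$ from the subadditivity inequality $\mathrm{OPT}(\mathcal{X})+\mathrm{OPT}(\mathcal{X}')\ge\mathrm{OPT}(\mathcal{X}\cup\mathcal{X}')$ via the Kingman/Fekete argument (citing \cite{rhee1988optimal1}) and then invokes Proposition \ref{prop:CEAS} to replace the limits superior by limits. Your added care about the distributional identity of the union of independent i.i.d.\ samples and the direction of the inequality is a reasonable filling-in of details the paper leaves implicit.
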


The results so far in this section concern the so-called stochastic bin packing problem, which are of independent interest. As noted in the introduction, our treatment simplifies and improves the analyses in \citep{rhee1988optimal1, rhee1989optimal2, rhee1989optimal3}. The key of our analysis is to show that the relation between the quantile-based benchmark $\mathrm{OPT}\left(\left\{F^{-1}\left(\frac{1}T\right),\dots,F^{-1}\left(\frac{T}T\right)\right\}\right)$ and OPT$(\mathcal{X}_T)$ remains invariant with respect to different distributions.

\subsection{Regret Analysis for Stochastic Model}

From the analysis in the previous subsection, we know that the benchmark $\text{CE}(F)$ can be used to relate the offline optimal objectives with different horizon lengths. Consequently, together with the analysis in Section \ref{sec_warm_up}, we can derive a regret upper bound for Algorithm \ref{alg:overflow} as follows.

\begin{theorem}[Upper bound for stochastic model]
\label{thm:stochastic}
Suppose $\mathcal{X}_T=\{X_1,\dots,X_T\}$ is a set of i.i.d. samples from distribution $F$. The following inequality holds for all $T\ge 1,$
$$ \sup_{\mathrm{supp}(F)\subset [0,1]} \{\E [N^\pi(\mathcal{X}_T) ]- \E[\mathrm{OPT}(\mathcal{X}_T)]\} \leq 10\sqrt{T} + 2\lceil\log_2 T\rceil^3 + 13\lceil\log_2 T\rceil^2 + 43\lceil\log_2 T\rceil + 13$$
where the expectation is taken with respect to the distribution $F$ and $\pi$ denotes the online packing scheme specified by Algorithm \ref{alg:overflow}.
\end{theorem}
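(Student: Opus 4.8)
The plan is to assemble the three ingredients already in place: the per-phase bin-usage bound of Corollary~\ref{coro_on2off}, the upper bound on $\E[\mathrm{OPT}]$ via the quantile/CE benchmark in Corollary~\ref{coro:ce}, and the matching lower bound in Proposition~\ref{prop:celower}. Starting from Corollary~\ref{coro_on2off}, I would write
$$\E[N^\pi(\mathcal{X}_T)]\le\sum_{k=1}^{K}\E[\mathrm{OPT}(\mathcal{X}_{T_{k-1}})] + 2\sqrt2\sum_{k=1}^{K}\sqrt{T_{k-1}} + 1,$$
with $K=\lceil\log_2 T\rceil$ and $T_k=\lceil T/2^{K-k}\rceil$, and then subtract $\E[\mathrm{OPT}(\mathcal{X}_T)]$ from both sides. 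The heart of the argument is to control $\sum_{k}\E[\mathrm{OPT}(\mathcal{X}_{T_{k-1}})]-\E[\mathrm{OPT}(\mathcal{X}_T)]$ by inserting $T_{k-1}\cdot\mathrm{CE}(F)$ as a pivot: Corollary~\ref{coro:ce} bounds each phase by $\E[\mathrm{OPT}(\mathcal{X}_{T_{k-1}})]\le T_{k-1}\mathrm{CE}(F)+2\sqrt{2T_{k-1}}+4\log_2^2 T_{k-1}+17\log_2 T_{k-1}+12$, while Proposition~\ref{prop:celower} gives $T\cdot\mathrm{CE}(F)\le\E[\mathrm{OPT}(\mathcal{X}_T)]+4\log_2^2 T+17\log_2 T+12$.

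The leading terms should nearly cancel because the horizons $T_{k-1}$ form a geometric sequence. Indeed $\sum_{k=1}^{K}T_{k-1}=\sum_{j=0}^{K-1}T_j\le T(1-2^{-K})+K\le T+K$, so after pivoting the $\mathrm{CE}(F)$-terms collapse to $\mathrm{CE}(F)\big(\sum_k T_{k-1}-T\big)\le \mathrm{CE}(F)\,K\le K$, where I use the trivial bound $\mathrm{CE}(F)\le 1$ (each item fits in its own bin). The residual square-root error then comes from two contributions, both summed over the phases: the overflow term $2\sqrt2\sum_k\sqrt{T_{k-1}}$ from Corollary~\ref{coro_on2off} and the per-phase CE slack from the upper bound. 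Using $T_{k-1}\le T/2^{K-k+1}+1$ together with the geometric series $\sum_{i\ge1}2^{-i/2}=\sqrt2+1$, I would bound $\sum_{k=1}^{K}\sqrt{T_{k-1}}\le(1+\sqrt2)\sqrt T+K$, turning the two contributions into a single clean $C\sqrt T$ term. The accumulated logarithmic error $\sum_{k=1}^{K}\big(4\log_2^2 T_{k-1}+17\log_2 T_{k-1}+12\big)$ is controlled by observing that $T_{k-1}\le 2^{k-1}$ (since $T\le 2^K$), hence $\log_2 T_{k-1}\le k-1$; summing $\log_2^2 T_{k-1}$ over the $K$ phases then produces the cubic term $\Theta(K^3)=\Theta(\lceil\log_2 T\rceil^3)$ and the lower-order $K^2$ and $K$ terms.

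The main obstacle is the constant bookkeeping rather than any conceptual difficulty. Two points require care. First, one must ensure the leading $T\cdot\mathrm{CE}(F)$ terms genuinely cancel against $\E[\mathrm{OPT}(\mathcal{X}_T)]$ up to $O(\log T)$; this relies simultaneously on the near-exact geometric sum $\sum_j T_j\approx T$ and on $\mathrm{CE}(F)\le 1$, since a looser treatment of either leaves a spurious $\Omega(T)$ or $\Omega(\sqrt{T}\log T)$ term and destroys the bound. Second, the two square-root contributions must be combined against the geometric-series constant $1+\sqrt2$ tightly, because a naive summation overshoots the stated coefficient; I expect the delicate step to be squeezing these constants down to the claimed value, where invoking the sharper $2\sqrt{T_{k-1}}$ bound of Proposition~\ref{prop:certainty} in place of the looser $2\sqrt{2T_{k-1}}$ for the CE slack may be needed. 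Once the $\sqrt T$ coefficient and the polynomial in $\lceil\log_2 T\rceil$ are pinned down and $\mathrm{CE}(F)$ is eliminated, the inequality follows by collecting all terms, and taking the supremum over $F$ is automatic since every constituent bound is distribution-free.
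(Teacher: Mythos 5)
Your proposal is correct and follows essentially the same route as the paper's proof: per-phase overflow from Corollary~\ref{coro_on2off}, pivoting each $\E[\mathrm{OPT}(\mathcal{X}_{T_{k-1}})]$ through $T_{k-1}\cdot\mathrm{CE}(F)$ via Corollary~\ref{coro:ce} with $\mathrm{CE}(F)\le 1$, a geometric-series bound on $\sum_k\sqrt{T_{k-1}}$ with constant $\frac{1}{\sqrt{2}-1}$, the $\log_2 T_{k-1}\le k-1$ accounting that yields the $K^3$ term, and finally Proposition~\ref{prop:celower} to return to $\E[\mathrm{OPT}(\mathcal{X}_T)]$. The only cosmetic difference is that you bound $\sum_k T_{k-1}-T\le K$ globally where the paper telescopes $(T_k-T_{k-1})\cdot\mathrm{CE}(F)$ phase by phase, and your closing remark about needing the sharper $2\sqrt{T_{k-1}}$ slack from Proposition~\ref{prop:certainty} to reach the stated coefficient is exactly what the paper does.
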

\begin{proof}
From Corollary \ref{coro_on2off} and Corollary \ref{coro:ce}, we know that for each phase $1 \leq k \leq K = \lceil \log_2 T \rceil$,
\begin{align*}
    \E[N^{\pi}(\mathcal{X}_{T_{k-1}+1:T_{k}})] \leq & \E[\mathrm{OPT}(\mathcal{X}_{T_{k-1}})] + 2\sqrt{T_{k-1}}\\
    \leq & T_{k-1}\cdot \mathrm{CE}(F) + 4\sqrt{T_{k-1}} + 4(k-1)^2 + 17(k-1) + 11\\
    \leq & (T_k-T_{k-1})\cdot \mathrm{CE}(F) + 4\sqrt{T_{k-1}} + 4(k-1)^2 + 17(k-1) + 12\\
    \leq & (T_k-T_{k-1})\cdot \mathrm{CE}(\mathcal{F}) + 4\sqrt{T_k-T_{k-1}} + 4(k-1)^2 + 17(k-1) + 16.
\end{align*}
Here, we note 
\begin{align*}
    4\sqrt{T_k - T_{k-1}} \leq  4\sqrt{\frac{T_k}2} 
    \leq 4\sqrt{\frac{T_{k+1} - T_{k}}{2}} + \frac{4}{\sqrt{2}} \le \cdots
    \leq  4\sqrt{\frac{T_{K} - T_{K-1}}{2^{K-k}}} + 10.
\end{align*}
Taking a summation with respect to $k$,
\begin{align*}
    \E[N^{\pi}(\mathcal{X}_{T})] = & 1 + \sum_{k=1}^{K} \E[N^{\pi}(\mathcal{X}_{T_{k-1}+1:T_{k}})]\\
    \leq & T\cdot \mathrm{CE}(F) + \sum_{k=1}^K [4\sqrt{T_k - T_{k-1}} + 4(k-1)^2 + 17(k-1) + 16]\\
    \leq & T\cdot \mathrm{CE}(F) + \sum_{k=1}^K [4\sqrt{T_K - T_{K-1}}\cdot \frac{1}{\sqrt{2}^{K-k}} + 4(k-1)^2 + 17(k-1) + 26]\\
    \leq & T\cdot \mathrm{CE}(F) + \frac{4}{\sqrt{2}-1}\sqrt{T} + \frac{4}{3}K^3 + \frac{17}{2}K^2 + 26K\\
    \leq & T\cdot \mathrm{CE}(F) + 10\sqrt{T} + 2\lceil\log_2 T\rceil^3 + 9\lceil\log_2 T\rceil^2 + 26\lceil\log_2 T\rceil.
\end{align*}
By plugging in the result from Proposition \ref{prop:celower}, we complete the proof.
\end{proof}

\section{Random Permutation Model and Approximate Algorithms}

\label{sec_RP_AA}

\subsection{Random Permutation Model}

\label{secRandPerm}

The random permutation model can be viewed as a sampling without replacement procedure from a ground set $\mathcal{X}_T' = \{X'_1,...,X'_T\}$, while item sizes $X_1',...,X_T'$ can be adversarially chosen. In this section, we show that compared to the stochastic model, the random permutation model does not make essential change to the nature of the problem. As the random permutation model can be viewed as a more general model than the stochastic model, the analysis in this section serves as an alternative proof for the regret bound in the last section. Without loss of generality, we assume the ground set $\mathcal{X}_T'$ is ordered, i.e.,
$$0<X_1' \le X_2' \le \cdots \le X_T' \le 1.$$
We denote the arrival sequence of the items as $\mathcal{X}_T=\{X_1,...,X_T\}$ where $X_t$ is the $t$-th item arrived in the online sequence, i.e., $t$-th item sampled from $\mathcal{X}_T'$ without replacement.

The following proposition establishes the relation between the optimal objective value of a randomly selected subset of items and that of the whole set. We remark that the factor of $2^k$ is not essential and it can be replaced by any positive integer, but the current choice suffices for our analysis.

\begin{proposition}
\label{prop:RP}
For a fixed $k\in \mathbb{N}$, let $\tau = \ceil{\frac{T}{2^k}}$ and let $\mathcal{X}_{\tau} = \{X_1,...,X_{\tau}\}$ be a set of $\tau$ items sampled without replacement from $\mathcal{X}_T'.$ Then we have 
$$ \E[\mathrm{OPT}(\mathcal{X}_{\tau})] \leq \frac1{2^k} \mathrm{OPT}(\mathcal{X}_{T}') + 2\sqrt{\tau} + 4\log_2^2 \tau + 17\log_2 \tau + 13$$
where the expectation is taken with respect to the random permutation/sampling without replacement.
\end{proposition}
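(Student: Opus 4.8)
The plan is to mirror the two-step treatment of the stochastic model: first a symmetrization/queueing bound that replaces the random sample $\mathcal{X}_\tau$ by a deterministic quantile template of the ground set (the analogue of Proposition \ref{prop:certainty}), and then an LP-relaxation-and-scaling bound that relates this template to $\frac{1}{2^k}\mathrm{OPT}(\mathcal{X}_T')$ (the analogue of Corollary \ref{coro:ce}). I would assume first that $2^k \mid T$, so that $\tau = T/2^k$, and partition the sorted ground set into $\tau$ consecutive blocks of size $2^k$, the $j$-th being $\{X'_{(j-1)2^k+1},\dots,X'_{j2^k}\}$. Define the quantile template $\mathcal{Y}_\tau \coloneqq \{X'_{2^k}, X'_{2\cdot 2^k},\dots,X'_{\tau\cdot 2^k}\}$, i.e., the block maxima. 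The goal is to prove the two inequalities $\E[\mathrm{OPT}(\mathcal{X}_\tau)] \le \mathrm{OPT}(\mathcal{Y}_\tau) + 2\sqrt{\tau}$ and $\mathrm{OPT}(\mathcal{Y}_\tau) \le \tfrac{1}{2^k}\mathrm{OPT}(\mathcal{X}_T') + 4\log_2^2\tau + 17\log_2\tau + 12$, and then chain them.

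For the first inequality I would pack $\mathcal{X}_\tau$ into the optimal solution of $\mathrm{OPT}(\mathcal{Y}_\tau)$ by the slot-filling rule of Algorithm \ref{alg:overflow}, so that the number of bins used is $\mathrm{OPT}(\mathcal{Y}_\tau)+Q_\tau$ and hence $\mathrm{OPT}(\mathcal{X}_\tau) \le \mathrm{OPT}(\mathcal{Y}_\tau)+Q_\tau$. Traversing the blocks left to right, let $\psi_j$ be the number of sampled items landing in block $j$; since every item of block $j$ is $\le X'_{j2^k}$, the template slot $X'_{j2^k}$ can serve one such item, which reproduces exactly the queue recursion $Q_0=0$, $Q_j=\max\{Q_{j-1}+\psi_j-1,0\}$ of Section \ref{sec_warm_up}. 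The crucial observation is that $(\psi_1,\dots,\psi_\tau)$ is multivariate hypergeometric with all block sizes equal to $2^k$, hence exchangeable, and $\sum_j(\psi_j-1)=\tau-\tau=0$. Writing $Q_\tau=\max_{\ell\le\tau}\big(-\sum_{j\le \ell}(\psi_j-1)\big)$ and applying Proposition \ref{prop:exchange} to $\xi_j=\psi_j-1$ followed by Cauchy--Schwarz gives $\E[Q_\tau]\le 2\sqrt{\sum_j \mathrm{Var}(\psi_j)}$; since $\E[\psi_j]=1$ and the hypergeometric variance satisfies $\mathrm{Var}(\psi_j)=(1-\tfrac1\tau)\tfrac{T-\tau}{T-1}\le 1$, this is at most $2\sqrt{\tau}$, exactly as in Proposition \ref{prop:certainty}.

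For the second inequality I would first invoke Proposition \ref{prop:relaxedgap} to pass to the relaxed LP, $\mathrm{OPT}(\mathcal{Y}_\tau)\le \mathrm{OPT}_f(\mathcal{Y}_\tau)+4\log_2^2\tau+17\log_2\tau+11$, and then exploit the monotonicity and positive homogeneity of $\mathrm{OPT}_f$. Using $\mathrm{OPT}_f(\mathcal{Y}_\tau)=\tfrac{1}{2^k}\mathrm{OPT}_f(2^k\cdot\mathcal{Y}_\tau)$, I would map, for $j=1,\dots,\tau-1$, the $2^k$ copies of the block-$j$ maximum $X'_{j2^k}$ onto the genuine items of block $j+1$, which are all $\ge X'_{j2^k}$; this accounts for $\{X'_{2^k+1},\dots,X'_T\}\subseteq \mathcal{X}_T'$, while the remaining $2^k$ copies of $X'_T$ cost at most $2^k$ bins. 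By subadditivity and element-wise monotonicity of $\mathrm{OPT}_f$ this yields $\mathrm{OPT}_f(2^k\cdot\mathcal{Y}_\tau)\le \mathrm{OPT}_f(\mathcal{X}_T')+2^k$, hence $\mathrm{OPT}_f(\mathcal{Y}_\tau)\le\tfrac{1}{2^k}\mathrm{OPT}_f(\mathcal{X}_T')+1\le \tfrac{1}{2^k}\mathrm{OPT}(\mathcal{X}_T')+1$, giving the constant $12=11+1$.

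The main obstacle is the case $2^k\nmid T$, where the $\tau$ blocks cannot all have size $2^k$ and the counts $(\psi_1,\dots,\psi_\tau)$ are no longer exactly exchangeable, so Proposition \ref{prop:exchange} does not apply verbatim. I would handle this precisely as in the general-distribution part of Proposition \ref{prop:certainty}: the identity $\sum_j(\psi_j-1)=0$ survives (there are $\tau$ samples and $\tau$ servers regardless of the block sizes), and I would dominate the unequal-block process by an idealized equal-block, hence exchangeable, process through a coupling/stochastic-dominance argument, absorbing the $O(1)$ discrepancy created by the rounding $\tau=\lceil T/2^k\rceil$ into the additive constant and thereby upgrading $12$ to the claimed $13$. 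This is the one step that requires genuine care; the homogeneity and monotonicity manipulations of the second step and the variance computation of the first step are otherwise routine.
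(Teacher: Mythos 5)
Your proposal follows essentially the same route as the paper's proof: partition the sorted ground set into $\tau$ blocks of size $2^k$, use the block maxima as the offline template, observe that the block occupancy counts are multivariate hypergeometric and hence exchangeable with increments summing to zero, apply Proposition \ref{prop:exchange} and Cauchy--Schwarz to bound the terminal queue length by $2\sqrt{\tau}$, and then pass to $\mathrm{OPT}_f$ and use homogeneity plus element-wise monotonicity to compare the scaled template with $\mathcal{X}_T'$ (your mapping of the block-$j$ maxima onto block $j+1$ is in fact marginally cleaner than the paper's trimming step, which loses $2$ rather than $1$ after rescaling). The one step you leave as a sketch --- the case $2^k \nmid T$ --- is resolved in the paper by the concrete device your dominance argument is gesturing at: append $2^k\tau - T$ items of size $1$ to the ground set so that all blocks have exactly $2^k$ elements and the counts are exactly exchangeable, noting that any sampled padding item can be replaced by a resampled genuine item placed in the unit-size slot reserved for it, so the padded dynamic dominates the original one.
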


\begin{proof}
Let $T_0=2^k \tau \ge T.$ We first append $T_0-T$ items with size one to construct a new set $\mathcal{X}_{T_0}'$, i.e., $\mathcal{X}_{T_0}'=\mathcal{X}_T'\cup\{X_{T+1}',\dots, X_{T_0}'\}$ where $X_{T+1}'=\cdots=X_{T_0}'=1.$ 
The addition of these items will not make no difference to the packing dynamic as we can always re-sample another item if we sampled a newly added item out of the original $\mathcal{X}_T$, and then place the re-sampled item into the slot prepared for the newly added item of size $1$. So we will proceed to analyze the packing dynamic with this $\mathcal{X}_{T_0}$ instead.

Consider the set $\{X_{2^k}', X_{2^k\cdot 2}',\dots,X_{2^k\cdot \tau}'\}$ and the packing of $\mathcal{X}_\tau$ into the optimal packing solution of $$\mathrm{OPT}(\{X_{2^k}',X_{2^k\cdot 2}',\dots,X_{2^k\cdot \tau}'\}).$$

Then we divide $\mathcal{X}_{T_0}'$ into $\tau$ disjoint subsets, where for $i=1,...,\tau$ the $i$-th subset consists of $2^k$ items, namely, $X_{2^k \cdot(i-1) + 1}', \dots, X_{2^k \cdot i}'$. Then the distribution of $\mathcal{X}_{\tau}$ items among all $\tau$ subsets follows the hypergeometric distribution, where each subset contains $2^k$ items. We define the number of items in $\mathcal{X}_{\tau}$ from subset $i$ to be $\eta_i$ and then,
$$(\eta_1,\dots,\eta_{\tau}) \sim \mathrm{Hypergeometric}(\tau, T_0, (2^k,\dots,2^k)). $$
As before, we view each item as a customer and each slot as a server. Then we can define the queue length process
$$ Q_0 \coloneqq 0,\ \ Q_i \coloneqq \max\{Q_{i-1} + \eta_i - 1, 0\}$$
for $i=1,...,\tau$.

Since all $\eta_i$'s are exchangeable with $\sum_{i=1}^{\tau} (\eta_i - 1) = \tau - \tau = 0,$
we have
\begin{align*}
    \E[Q_{\tau}] = & \E\left[\left|\min_{k\leq \tau} \sum_{i=1}^k (\eta_i - 1)\right|\right]\\
    \leq & 2\E\left[\left|\sum_{i=1}^{\tau} (\eta_i - 1)\cdot r_i\right|\right]\\
    \leq & 2\left(\E\left[\left|\sum_{i=1}^{\tau} (\eta_i - 1)^2 r_i^2\right|\right]\right)^{\frac12}\\
    = & 2(\tau\cdot \mathrm{Var}(\eta_i))^{\frac{1}2}\\
    = & 2(\tau\cdot \frac{(\tau - 1)(2^k - 1)}{2^k \cdot \tau - 1})^{\frac{1}2}\\
    \leq & 2\sqrt{\tau}.
\end{align*}
This means
\begin{equation*}
\E[\mathrm{OPT}(\mathcal{X}_\tau)] \le \mathrm{OPT}(\{X_{2^k}',X_{2^k\cdot 2}',\dots,X_{2^k\cdot \tau}'\}) + 2\sqrt{\tau}.
\end{equation*}
Finally, for the first term on the right-hand-side in above, we can used the relaxed optimal objective and have the following
\begin{align*}
    &\mathrm{OPT}(\{X_{2^k}',X_{2^k\cdot 2}',\dots,X_{2^k\cdot \tau}'\}) \\
    \leq & \mathrm{OPT}_f(\{X_{2^k}',X_{2^k\cdot 2}',\dots,X_{2^k\cdot \tau}'\}) + 4\log_2^2 \tau + 17\log_2 \tau + 11\\
    \leq & \frac1{2^k} \mathrm{OPT}_f(\{X_{2^k}',X_{2^k+1}',\dots,X_{2^k\cdot(\tau + 1) - 1}'\}) + 4\log_2^2 \tau + 17\log_2 \tau  + 11\\
    \leq & \frac1{2^k} \mathrm{OPT}_f(\{X_{2^k}',X_{2^k+1}',\dots,X_{T}'\}) +  4\log_2^2 \tau + 17\log_2 \tau + 13\\
    \leq & \frac1{2^k} \mathrm{OPT}_f(\{X_1', X_2',\dots,X_{T}'\}) + 4\log_2^2 \tau + 17\log_2 \tau + 13.
\end{align*}
Here the first line is from Proposition \ref{prop:relaxedgap}. The second line comes from appending after each item $X_{2^k\cdot i}'$ with $2^{k}-1$ items with larger or equal size for $i=1,...,\tau$. The third line comes from by removing the last $2\cdot 2^k$ items and this can be done with at most $2\cdot 2^k$ bins. The last line comes from adding $2^k-1$ items in the front. 
\end{proof}

Proposition \ref{prop:RP} is parallel to Proposition \ref{prop:certainty} and Proposition \ref{prop:celower} in that these results all provide us a way to relate optimal objective values of different horizon lengths. Thus with Proposition \ref{prop:RP}, we can apply the same reasoning as Theorem \ref{thm:stochastic} to obtain the following regret bound for the random permutation model.

\begin{theorem}[Upper bound for random permutation model]
Suppose $\mathcal{X}_T = \{X_1,\dots,X_T\}$ is a randomly permuted sequence of the ground set $\mathcal{X}'_T=\{X_1',\dots,X_T'\}$. 
The following inequality holds for all $T\ge 1,$
$$ \E[N^{\pi}(\mathcal{X}_T)]\leq \mathrm{OPT}(\mathcal{X}_T') + 12\sqrt{T} + 2\lceil\log_2 T\rceil^3 + 9\lceil\log_2 T\rceil^2 + 31\lceil\log_2 T\rceil + 1$$
where the expectation is taken with respect to the random permutation and $\pi$ denotes the online packing scheme specified by Algorithm \ref{alg:overflow}.
\end{theorem}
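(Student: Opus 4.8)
The plan is to combine the phase-wise queueing bound from Corollary~\ref{coro_on2off} with the new sampling-without-replacement estimate in Proposition~\ref{prop:RP}, mirroring step-by-step the argument already carried out for the stochastic model in Theorem~\ref{thm:stochastic}. The only structural difference between the two settings is the benchmark that relates offline optima of different horizon lengths: in the stochastic case we used $T\cdot\mathrm{CE}(F)$ as an intermediary, whereas here Proposition~\ref{prop:RP} directly bounds $\E[\mathrm{OPT}(\mathcal{X}_\tau)]$ by $2^{-k}\,\mathrm{OPT}(\mathcal{X}_T')$ plus an $O(\sqrt{\tau})$ term. So the natural target is to express every phase's bin usage in terms of $\mathrm{OPT}(\mathcal{X}_T')$ rather than a certainty-equivalent quantity.

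\textbf{First I would} recall that in the random permutation model the distributional symmetry $\mathcal{X}\overset{\mathcal{D}}{=}\mathcal{X}'$ used in Section~\ref{sec_warm_up} still holds (both are samples without replacement from the common ground set), so Corollary~\ref{coro_on2off} applies verbatim and tells us that the online algorithm uses at most
\begin{equation*}
\sum_{k=1}^{K}\Big(\E[\mathrm{OPT}(\mathcal{X}_{T_{k-1}})]+2\sqrt{2T_{k-1}}\Big)+1
\end{equation*}
bins on expectation, where $K=\lceil\log_2 T\rceil$ and $T_{k-1}=\lceil T/2^{K-k+1}\rceil$. The next step is to feed Proposition~\ref{prop:RP} into the $\E[\mathrm{OPT}(\mathcal{X}_{T_{k-1}})]$ factor. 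Since $T_{k-1}\approx T/2^{K-k+1}$ plays exactly the role of $\tau=\lceil T/2^{j}\rceil$ with $j=K-k+1$, Proposition~\ref{prop:RP} gives
\begin{equation*}
\E[\mathrm{OPT}(\mathcal{X}_{T_{k-1}})]\le \frac{1}{2^{K-k+1}}\mathrm{OPT}(\mathcal{X}_T')+2\sqrt{T_{k-1}}+4\log_2^2 T_{k-1}+17\log_2 T_{k-1}+13.
\end{equation*}
The leading terms $2^{-(K-k+1)}\mathrm{OPT}(\mathcal{X}_T')$ form a geometric series in $k$ that sums to at most $\mathrm{OPT}(\mathcal{X}_T')$, which is precisely how the single offline optimum on the ground set emerges in the final bound.

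\textbf{Then I would} collect the remaining error terms exactly as in Theorem~\ref{thm:stochastic}. The $\sqrt{T_{k-1}}$ contributions (both the $2\sqrt{2T_{k-1}}$ from the queue and the $2\sqrt{T_{k-1}}$ from Proposition~\ref{prop:RP}) are geometrically decreasing in $K-k$, so the telescoping-type bound $\sqrt{T_{k-1}}\le \sqrt{T/2^{\,K-k}}/\sqrt{2}+O(1)$ lets the sum over $k$ be controlled by $\frac{c}{\sqrt{2}-1}\sqrt{T}+O(K)$; matching the constant $12$ in the statement is a matter of tracking the two $\sqrt{\cdot}$ sources carefully. The polylogarithmic terms $4\log_2^2 T_{k-1}+17\log_2 T_{k-1}+13$, bounded crudely by their values at $T_{k-1}\le T$, contribute a sum of the shape $\frac{4}{3}K^3+\tfrac{17}{2}K^2+\cdots$ after using $\sum_{k\le K}(k-1)^2=O(K^3)$, yielding the $2\lceil\log_2 T\rceil^3+9\lceil\log_2 T\rceil^2+31\lceil\log_2 T\rceil$ terms.

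\textbf{The hard part will be} the bookkeeping of the ceiling-induced discrepancies rather than any conceptual difficulty: $\tau=\lceil T/2^k\rceil$ and $T_{k-1}=\lceil T/2^{K-k}\rceil$ are only approximately dyadic, the appended size-one items in the proof of Proposition~\ref{prop:RP} shift cardinalities by up to $2\cdot 2^k$, and the geometric bound on $\sqrt{T_k-T_{k-1}}$ accrues additive $O(1)$ slack at each level. One must verify that these $O(1)$ and $O(2^k)$ slacks, once divided by the appropriate $2^k$ in the geometric sum, remain absorbable into the stated constants, so that the final $\sqrt{T}$ coefficient stays at $12$ and the additive constant at $1$. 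Since Proposition~\ref{prop:RP} is stated to be insensitive to the exact dyadic choice, I would invoke that remark to keep the $2^k$ factors benign, and otherwise the argument is a direct transcription of the stochastic-model proof with $T\cdot\mathrm{CE}(F)$ replaced by the geometric collapse of $2^{-(K-k+1)}\mathrm{OPT}(\mathcal{X}_T')$.
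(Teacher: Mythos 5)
Your proposal is correct and follows essentially the same route as the paper's proof: Corollary~\ref{coro_on2off} for the per-phase queueing overhead, Proposition~\ref{prop:RP} with $\tau=T_{k-1}$ to replace each $\E[\mathrm{OPT}(\mathcal{X}_{T_{k-1}})]$ by $2^{-(K-k+1)}\mathrm{OPT}(\mathcal{X}_T')$ plus lower-order terms, the geometric collapse of those leading terms to $\mathrm{OPT}(\mathcal{X}_T')$, and the telescoping control of the combined $(2\sqrt{2}+2)\sqrt{T_{k-1}}$ contributions giving the $\frac{2\sqrt{2}+2}{\sqrt{2}-1}\sqrt{T}\le 12\sqrt{T}$ coefficient. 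The remaining bookkeeping of ceilings and polylogarithmic sums is exactly as you describe.
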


\begin{proof}
For each phase $1 \leq k \leq K = \lceil \log_2 T \rceil$, $T_k = \lceil \frac{T}{2^{K-k}} \rceil.$ From Corollary \ref{coro_on2off} and Proposition \ref{prop:RP}, we know
\begin{align*}
    \E[N^\pi(\mathcal{X}_{T_{k-1}+1:T_k})] \leq & \E[\mathrm{OPT}(\mathcal{X}_{T_{k-1}})] + 2\sqrt{2T_{k-1}}\\
    \leq & \frac{1}{2^{K-k+1}}\E[\mathrm{OPT}(\mathcal{X}_T')] + (2\sqrt{2} + 2)\sqrt{T_{k-1}} + 4\log_2^2 T_{k-1} + 17\log_2 T_{k-1} + 13\\
    \leq & \frac{1}{2^{K-k+1}}\E[\mathrm{OPT}(\mathcal{X}_T')] + (2\sqrt{2} + 2)\sqrt{T_k - T_{k-1}} + 4(k-1)^2 + 17(k-1) + 19.
\end{align*}
Here the first line comes from Corollary \ref{coro_on2off}, the second line comes from Proposition \ref{prop:RP}, and the last line comes from plugging in the value $T_k$ (and $T_{k-1}$).

Moreover, simple calculation tells
\begin{align*}
    (2\sqrt{2} + 2)\sqrt{T_k - T_{k-1}} \leq (2\sqrt{2} + 2)\sqrt{\frac{T_k}{2}}
    \leq (2\sqrt{2} + 2)\sqrt{\frac{T_K - T_{K-1}}{2^{K-k}}} + 12.
\end{align*}

Sum all above estimations up,
\begin{align*}
    \E[N^\pi(\mathcal{X}_T)] = & 1 + \sum_{k=1}^{K} \E[N^\pi(\mathcal{X}_{T_{k-1}+1:T_k})]\\
    \leq & \E[\mathrm{OPT}(\mathcal{X}_T')] + \sum_{k=1}^K [(2\sqrt{2} + 2)\sqrt{T_k - T_{k-1}} + 4(k-1)^2 + 17(k-1) + 19] + 1\\
    \leq & \E[\mathrm{OPT}(\mathcal{X}_T')] + \sum_{k=1}^K [(2\sqrt{2} + 2)\sqrt{\frac{T_K - T_{K-1}}{2^{K-k}}} + 4(k-1)^2 + 17(k-1) + 31] + 1\\
    \leq & \E[\mathrm{OPT}(\mathcal{X}_T')] + \frac{2\sqrt{2} + 2}{\sqrt{2}-1}\sqrt{T} + \frac{4}{3}K^3 + \frac{17}{2}K^2 + 31K + 1\\
    \leq & \E[\mathrm{OPT}(\mathcal{X}_T')] + 12\sqrt{T} + 2\lceil\log_2 T\rceil^3 + 9\lceil\log_2 T\rceil^2 + 31\lceil\log_2 T\rceil + 1.
\end{align*}
Thus we complete the proof.
\end{proof}

\subsection{Approximate Algorithm as Offline Oracle}

\label{sec_approx}

Now we show that Algorithm \ref{alg:overflow} does not require an exact optimal solution for the problem OPT$(\mathcal{X}_{T_{k-1}})$. Many approximate algorithms have been proposed for solving (offiline) bin packing problem based on the Gilmore-Gomory LP relaxation \citep{gilmore1961linear}, such as \cite{karmarkar1982efficient}, \cite{rothvoss2013approximating}, and \cite{hoberg2017logarithmic}. For example, the well-known grouping algorithms in \cite{karmarkar1982efficient} use the relaxed linear program to produce an approximated solution for the integer program. Their relaxation gap is on the order of $O(\log^2(\mathrm{OPT}_f))$ where the constants are shown in Proposition \ref{prop:relaxedgap} and $\mathrm{OPT}_f$ is the optimal value of the Gilmore-Gomory LP relaxation. Another better approximation is obtained by \cite{hoberg2017logarithmic}. They present a polynomial time algorithm that uses up to $O(\log (\mathrm{OPT}_f))$ additional bins.
\begin{proposition}[Theorem 1.2 in \cite{hoberg2017logarithmic}]
\label{prop:logarithm}
For any bin packing instance with item sizes $X_1, \dots, X_T \in [0,1]$, one can compute a solution with at most $\mathrm{OPT}_f + O(\log (\mathrm{OPT}_f))$ bins, where $\mathrm{OPT}_f$ denotes the optimal value of the Gilmore-Gomory LP relaxation. The algorithm is randomized and the expected running time is polynomial in $T$.
\end{proposition}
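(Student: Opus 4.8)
Since this statement is a verbatim restatement of the main theorem of \cite{hoberg2017logarithmic} (their Theorem 1.2), the plan within our paper is simply to invoke that result rather than to reprove it; it is used only to supply a better offline oracle than the grouping algorithm of Proposition \ref{prop:relaxedgap}. For orientation, however, I sketch the strategy one would follow to establish it from scratch, which is the strategy of \cite{hoberg2017logarithmic}.

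The plan is to round an optimal fractional solution of the Gilmore--Gomory configuration LP to an integral one while paying only $O(\log(\mathrm{OPT}_f))$ additional bins. First I would solve the LP to obtain a fractional point of cost $\mathrm{OPT}_f$ (the $\mathrm{OPT}_f(\mathcal{X})$ of Proposition \ref{prop:relaxedgap}). Next I would split the items into \emph{large} items, whose size exceeds a fixed threshold so that only boundedly many fit in a bin, and \emph{small} items; the small items contribute negligibly to the additive gap and can be absorbed into the residual slack essentially for free, so the effort concentrates on the large items, for which the configuration matrix has columns with a bounded number of nonzeros.

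The heart of the argument is a discrepancy-based rounding. The key step is to pass from the fractional solution to an integral one via a Lov\'asz--Spencer--Vesztergombi / Banaszczyk-type rounding theorem, whose error in each covering constraint is controlled by the (hereditary) discrepancy of the associated set system; the few items thereby left under-covered are then swept into $O(1)$ fresh bins. Executing this in rounds---each round rounding a constant fraction of the still-fractional variables and re-solving, at a cost of $O(1)$ bins per round---drives the fractional support to zero after $O(\log(\mathrm{OPT}_f))$ rounds, and the per-round costs telescope to the claimed bound. The randomized polynomial running time then follows by using an algorithmic (constructive) version of the underlying partial-coloring lemma rather than its merely existential form.

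The main obstacle is exactly the discrepancy estimate: showing that the relevant configuration set system has hereditary discrepancy $O(\log(\mathrm{OPT}_f))$, rather than the $O(\log^2(\mathrm{OPT}_f))$ implicit in the grouping analysis of \cite{karmarkar1982efficient}, is the genuinely technical core of \cite{hoberg2017logarithmic}. It is precisely this sharper discrepancy bound that upgrades the quadratic-in-$\log$ additive gap of Proposition \ref{prop:relaxedgap} to the linear-in-$\log$ gap asserted here, and reproducing it in full is what I would expect to dominate the work.
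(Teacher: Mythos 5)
Your reading is exactly right: the paper offers no proof of Proposition~\ref{prop:logarithm} at all---it is stated purely as a citation of Theorem~1.2 in \cite{hoberg2017logarithmic}, used only to note that a sharper offline oracle than the grouping algorithm of Proposition~\ref{prop:relaxedgap} exists. Your decision to invoke the result rather than reprove it matches the paper, and your sketch of the discrepancy-based rounding underlying the original theorem is accurate but not something the paper attempts to reproduce.
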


%It still remains open to identify the gap between $\mathrm{OPT}$ and $\mathrm{OPT}_f$ \citep{williamson2011design}. The best known lower bound is $\mathrm{OPT} = \lceil \mathrm{OPT}_f \rceil + 1$, while \cite{hoberg2017logarithmic} implies the upper bound with $\mathrm{OPT} = \mathrm{OPT}_f + O(\log(\mathrm{OPT}_f)).$

Our algorithm is compatible with these approximate algorithms. In the following, we revise the regret bounds for the case when the approximate algorithm in \citep{karmarkar1982efficient} is used to solve the offline bin packing problem in Algorithm \ref{alg:overflow}. Their proofs are identical to the previous case, just with some additional logarithmic terms to account for the approximation gap. We note that the same compatibility to approximate algorithms holds for the algorithm proposed by \cite{rhee1993line2, rhee1993line1}, but the algorithm by \cite{banerjee2020uniform} does not permit an approximate oracle to the offline bin packing problem.

\begin{theorem}[Upper bound for stochastic model with approximate oracle]
Suppose $\mathcal{X}_T=\{X_1,\dots,X_T\}$ is a set of i.i.d. samples from distribution $F$. The following inequality holds for all $T\ge 1,$
$$ \sup_{\mathrm{supp}(F)\subset [0,1]} \{\E [N^{\tilde{\pi}}(\mathcal{X}_T) ]- \E[\mathrm{OPT}(\mathcal{X}_T)]\} \leq 10\sqrt{T} + 3\lceil\log_2 T\rceil^3 + 21\lceil\log_2 T\rceil^2 + 54\lceil\log_2 T\rceil + 13$$
where the expectation is taken with respect to the distribution $F$ and $\tilde{\pi}$ denotes the online packing scheme specified by Algorithm \ref{alg:overflow} but with an approximate offline oracle given by \citep{karmarkar1982efficient}.
\end{theorem}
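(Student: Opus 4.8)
The plan is to re-run the proof of Theorem \ref{thm:stochastic} essentially verbatim, changing exactly one ingredient: the number of bins the algorithm prepares at the start of each phase. Write $\widetilde{\mathrm{OPT}}(\mathcal{X})$ for the number of bins in the feasible integer packing returned by the Karmarkar--Karp oracle of \cite{karmarkar1982efficient}. By Proposition \ref{prop:relaxedgap} (whose right-hand inequality is proved precisely by exhibiting this oracle),
$$\widetilde{\mathrm{OPT}}(\mathcal{X}) \le \mathrm{OPT}_f(\mathcal{X}) + 4\log_2^2(|\mathcal{X}|) + 17\log_2(|\mathcal{X}|) + 11 \le \mathrm{OPT}(\mathcal{X}) + 4\log_2^2(|\mathcal{X}|) + 17\log_2(|\mathcal{X}|) + 11.$$
So at phase $k$ the algorithm prepares $\widetilde{\mathrm{OPT}}(\mathcal{X}_{T_{k-1}})$ bins rather than $\mathrm{OPT}(\mathcal{X}_{T_{k-1}})$, an additive increase of at most $4\log_2^2 T_{k-1} + 17\log_2 T_{k-1} + 11$.

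First I would observe that the queueing analysis of Section \ref{sec_warm_up} is untouched by this substitution. The overflow $Q_{2N}$ counts only the online items that fail to find a vacant slot of no-smaller size among the offline items, so it is a function of the sizes and the random offline/online matching of $\mathcal{X}_{T_{k-1}} \cup \mathcal{X}_{(T_{k-1}+1):T_k}$ alone; it does not depend on how the offline items are grouped into bins. Any feasible packing, exact or approximate, produces one slot per offline item under the same capacity constraint, so the exchangeable-variable bound of Proposition \ref{prop_queue} still gives $\E[Q_{2N}] \le 2\sqrt{2T_{k-1}}$. Hence phase $k$ uses $\widetilde{\mathrm{OPT}}(\mathcal{X}_{T_{k-1}}) + Q_{2N}$ bins, and the approximate analogue of Corollary \ref{coro_on2off} reads
$$\E[N^{\tilde{\pi}}(\mathcal{X}_{(T_{k-1}+1):T_k})] \le \E[\mathrm{OPT}(\mathcal{X}_{T_{k-1}})] + 2\sqrt{2T_{k-1}} + 4\log_2^2 T_{k-1} + 17\log_2 T_{k-1} + 11.$$

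Next I would feed this per-phase bound through the identical chain as in Theorem \ref{thm:stochastic}: apply Corollary \ref{coro:ce} to convert $\E[\mathrm{OPT}(\mathcal{X}_{T_{k-1}})]$ into $T_{k-1}\cdot\mathrm{CE}(F)$ plus $O(\sqrt{T_{k-1}})$ and $O(\log^2 T_{k-1})$ terms, bound $T_{k-1}\cdot\mathrm{CE}(F)$ by $(T_k-T_{k-1})\cdot\mathrm{CE}(F)+1$, telescope the $\sqrt{T_k-T_{k-1}}$ contributions geometrically, and finally use Proposition \ref{prop:celower} to replace $T\cdot\mathrm{CE}(F)$ by $\E[\mathrm{OPT}(\mathcal{X}_T)]$. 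The square-root part is unaffected by the oracle and still sums to $10\sqrt{T}$. The only new contribution is the extra per-phase approximation gap; since $\log_2 T_{k-1}\le k-1$, it is bounded by $\sum_{k=1}^K (4(k-1)^2 + 17(k-1) + 11) = O(K^3)$, which adds roughly another $\tfrac43 K^3$ to the cubic term already present (raising its coefficient from $\tfrac43$ to $\tfrac83\le 3$) together with extra quadratic and linear mass, accounting for the change of the polylog coefficients from $(2,13,43)$ in Theorem \ref{thm:stochastic} to $(3,21,54)$ here.

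The main obstacle is not conceptual but bookkeeping: one must carefully re-sum the two families of $O(k^2)$ terms (one from Corollary \ref{coro:ce}, one from the approximation gap) and check that their combined cubic, quadratic, and linear coefficients fit under $3\lceil\log_2 T\rceil^3 + 21\lceil\log_2 T\rceil^2 + 54\lceil\log_2 T\rceil + 13$. The one genuinely structural point to confirm is the claim of the second paragraph, namely that the overflow $Q_{2N}$ is invariant to the choice of offline oracle; this is what guarantees that the approximation error enters purely additively through the prepared-bin count and does not degrade the $\sqrt{T}$ leading term.
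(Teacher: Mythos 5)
Your proposal is correct and matches the paper's (deliberately omitted) argument: the paper itself states that the proof is identical to that of Theorem \ref{thm:stochastic} up to the additional logarithmic terms from the oracle's approximation gap, which is exactly the substitution you carry out, and your key structural observation---that the overflow $Q_{2N}$ depends only on the item sizes and the slot-matching rule, not on the bin grouping, so the approximation error enters purely additively through the prepared-bin count---is the right justification. Your coefficient bookkeeping ($\tfrac43 K^3+\tfrac{17}{2}K^2+11K$ extra, yielding $3$, $21$, $54$) also reproduces the stated constants.
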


\begin{theorem}[Upper bound for random permutation model with approximate oracle]
Suppose $\mathcal{X}_T = \{X_1,\dots,X_T\}$ is a randomly permuted sequence of the ground set $\mathcal{X}'_T=\{X_1',\dots,X_T'\}$. The following inequality holds for all $T\ge 1,$
$$ \E[N^{\tilde{\pi}}(\mathcal{X}_T)]\leq \mathrm{OPT}(\mathcal{X}_T') + 12\sqrt{T} + 3\lceil\log_2 T\rceil^3 + 17\lceil\log_2 T\rceil^2 + 42\lceil\log_2 T\rceil + 1$$
where the expectation is taken with respect to the random permutation and $\tilde{\pi}$ denotes the online packing scheme specified by Algorithm \ref{alg:overflow} but with an approximate offline oracle given by \citep{karmarkar1982efficient}.
\end{theorem}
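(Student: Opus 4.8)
The plan is to reuse, almost verbatim, the argument that established the exact random permutation theorem, changing only the single place where the exact offline oracle enters. The central observation is that replacing the exact solver of $\mathrm{OPT}(\mathcal{X}_{T_{k-1}})$ by the Karmarkar--Karp approximate solver of \citep{karmarkar1982efficient} does not perturb the queueing/overflow analysis of Section \ref{sec_warm_up}. Indeed, the overflow $Q_{2N}$ in a phase is determined entirely by the $\pm1$ tokens placed on $[0,1]$, i.e.\ by the multisets of offline and online item sizes together with the matching rule $v(X_t)=\min\{s: X^{(s)}\ge X_t,\, I_s=1\}$; none of this references the bin-assignment map $h_k$. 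The approximate oracle alters $h_k$ and the number of prepared bins, but it does not change the number of slots (still one per past item) nor the sequence of availabilities, so the exchangeable-token dynamic, and hence the bound $\E[Q_{2N}]\le 2\sqrt{2T_{k-1}}$ of Proposition \ref{prop_queue}, carries over unchanged.

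First I would record the per-phase accounting. In phase $k$ the online scheme $\tilde{\pi}$ uses $\tilde{N}_k + Q_{2N}$ bins, where $\tilde{N}_k$ is the number of template bins output by the approximate oracle. By Proposition \ref{prop:relaxedgap} (whose proof exhibits Karmarkar--Karp's Algorithm 2 as producing a feasible integer solution within the stated gap) together with $\mathrm{OPT}_f\le\mathrm{OPT}$, we have $\tilde{N}_k \le \mathrm{OPT}(\mathcal{X}_{T_{k-1}}) + 4\log_2^2 T_{k-1} + 17\log_2 T_{k-1} + 11$. Combining with the unchanged overflow bound gives, for each $1\le k\le K=\lceil\log_2 T\rceil$,
\[ \E[N^{\tilde{\pi}}(\mathcal{X}_{T_{k-1}+1:T_k})] \le \E[\mathrm{OPT}(\mathcal{X}_{T_{k-1}})] + 2\sqrt{2T_{k-1}} + 4\log_2^2 T_{k-1} + 17\log_2 T_{k-1} + 11, \]
which is exactly the per-phase bound from the exact random permutation theorem plus the extra Karmarkar--Karp term $4\log_2^2 T_{k-1} + 17\log_2 T_{k-1} + 11$.

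Then I would apply Proposition \ref{prop:RP} (with exponent $K-k+1$) to replace $\E[\mathrm{OPT}(\mathcal{X}_{T_{k-1}})]$ by $\tfrac{1}{2^{K-k+1}}\mathrm{OPT}(\mathcal{X}_T')$ plus its own $2\sqrt{T_{k-1}}+4\log_2^2 T_{k-1}+17\log_2 T_{k-1}+13$ correction, and sum over $k$ exactly as in the exact proof. Using $\log_2 T_{k-1}\le k-1$ (up to rounding), the two polylogarithmic contributions add, so the per-phase quadratic coefficient doubles from $4$ to $8$ and the per-phase linear coefficient from $17$ to $34$; after the summations $\sum_{k=1}^K(k-1)^2\le\tfrac{1}{3}K^3$ and $\sum_{k=1}^K(k-1)\le\tfrac12 K^2$ this yields $\tfrac{8}{3}K^3 + 17K^2 + 42K$, which is absorbed into $3\lceil\log_2 T\rceil^3 + 17\lceil\log_2 T\rceil^2 + 42\lceil\log_2 T\rceil$. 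The $\sqrt{T_k-T_{k-1}}$ terms telescope geometrically exactly as before, producing the same leading $12\sqrt{T}$, since neither the overflow bound nor the $\sqrt{\cdot}$ term of Proposition \ref{prop:RP} is affected by the approximate oracle; the leading $\mathrm{OPT}(\mathcal{X}_T')$ coefficient (bounded by $\sum_k 2^{-(K-k+1)}\le 1$) and the additive $+1$ for phase $0$ are likewise unchanged.

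The main obstacle here is not conceptual but a matter of first justifying the decoupling claim and then tracking constants: one must confirm that the approximate oracle enters only through the bin count $\tilde{N}_k$ and leaves the exchangeable-token matching dynamic intact, and then check that the doubled polylogarithmic per-phase terms round up to the advertised coefficients $3,17,42$ rather than overflowing them. Both become routine once the per-phase inequality displayed above is in place.
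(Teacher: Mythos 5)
Your proposal is correct and takes essentially the same route as the paper: the paper gives no separate proof for this theorem, stating only that it is identical to the exact-oracle random permutation theorem with additional logarithmic terms for the approximation gap of \citep{karmarkar1982efficient}. Your decoupling observation (the oracle changes only the template bin count $\tilde{N}_k \le \mathrm{OPT}(\mathcal{X}_{T_{k-1}}) + 4\log_2^2 T_{k-1} + 17\log_2 T_{k-1} + 11$, leaving the exchangeable-token overflow dynamic and hence Proposition \ref{prop_queue} untouched) together with the summation $\tfrac{8}{3}K^3 + 17K^2 + 42K \le 3K^3 + 17K^2 + 42K$ is exactly the right way to fill in those details and reproduces the advertised constants.
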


\section{Numerical Experiments and Discussions}

\label{sec_experi}

The results in the previous sections show that the knowledge of $T$ can be utilized to improve  algorithm performance in general. However, when $T$ is unknown but Algorithm \ref{alg:overflow} is implemented blindly, or equivalently, when the algorithm is suddenly terminated at certain timestamp, it may suffer a linear regret. This is a common problem for algorithms that utilize the knowledge of $T$ \citep{shor1986average, asgeirsson2002closed}. Nevertheless, we remark that the implication of Algorithm \ref{alg:overflow} goes beyond the improvements on regret and benchmark analyses (done in the previous sections) from two aspects. First, it tells that the lower bound of $\Omega(\sqrt{T \log T})$ can be overcome under certain setting/problem instance.  Second, it shows the effectiveness of adaptive algorithm that utilizes the past observation to inform future decision making \citep{gupta2020random}. 

For the first aspect, the intuition is aligned with the previous work on integer-valued size distributions \citep{csirik2006sum, gupta2020interior} where $O(B\sqrt{T})$ regret is achieved. To give one more example, when the item sizes follow a uniform distribution on $\{1/K, \dots, K/K\}$, \cite{coffman1991fundamental, coffman1997bin} give a lower bound of $\Omega(\sqrt{T \log K})$ and show that the Best Fit algorithm achieves an $O(\sqrt{T} \log K)$ upper bound. For the second aspect, we utilize the idea of adaptive design and present an LP-based adaptive algorithm for the case of integer-valued size distribution. The algorithm demonstrates good numerical performance, even giving bounded regret under some distributions, and it exhibits good potential to be extended to real-valued distributions. While the previous sections show that the knowledge $T$ is useful in improving algorithm performance generally, the numerical experiments show that this is also true if we restrict our attention to integer-valued size distributions. We leave its analysis as a future question.

Suppose the size of the bins is $B\in \mathbb{N}$ and the item sizes are all integer-valued. Specifically, the distribution $F$ of item size is described by
$\prob(X_t=s_j)=p_j$ for $j=1,...,J$ and $t=1,...,T$ where $s_j \in \{1,...,B-1\}.$ To describe the new algorithm, we first introduce the a level-based LP formulation for the integer-valued bin packing problem (See \citep{csirik2006sum, gupta2020interior}). 
\begin{align*}
    \text{OPT}_I(F) \coloneqq \min_{\bm{v}} \  & \sum_{j=1}^{B-1} v(j,0),\\
    \text{s.t. } \  & \sum_{j=1}^h v(j,h-j) \ge \sum_{j=1}^{B-h} v(j,h), \text{ \ for any \ } h\in \{1,...,B-1\},\\
    & \sum_{h=0}^{B-j} v(j,h) = p_j, \text{ \ for any \ } j \in \{1,...,J\},\\
    & v(j,h)=0, \text{ \ for any \ } h \ge B-j+1, \\
    & v(j,h)\ge 0, \text{ \ for any \ } j,h.
\end{align*}
Here the decision variable $v(j,h)$ describes the fraction of type-$j$ items placed on the level $h$ of a bin. The first inequality constraint describes ``no floating items'', i.e., the total fraction of items that sit on the level $h$ ($\sum_{j=1}^{B-h} v(j,h)$) should not be more than the total fraction of items that end at level $h$ ($\sum_{j=1}^h v(j,h-j)$) since any item of former type must be placed just above an item of the latter type in its bin. The second equality constraint describes a long-term balance that the total fraction of type-$j$ items should equal to $p_j.$ The third and fourth constraints are naturally implied by the rule of the bin packing problem. 

For the online bin packing problem, the distribution $F$ is unknown. We denote the number of bins on the level $h$ at time $t$ as $N_t(h)$. In other words, at time $t$, there are $N_t(h)$ bins that hold items of a total size of $h$. To estimate the parameters of the distribution $F$, we construct estimators
$$\hat{p}_{tj}=\frac{\sum_{t'=1}^t I(X_{t'}=s_j)}{t}$$
for each time $t$.

Algorithm \ref{alg:nl} describes an LP-based adaptive algorithm, where at each time $t$, it solves an LP and uses the LP's optimal solution to guide the packing of the current item $X_t.$ The adaptive LP $\text{OPT}_{t}$ in Algorithm \ref{alg:nl} makes a few changes to the static LP problem $\text{OPT}_{I}(F).$ For the first constraint, it takes account of the current bin status $N_{t}(h).$ For the second constraint, it ensures that there will be a slot for the current item $X_t.$ For the third constraint, it plugs in the estimators for the true probability $p_{tj}.$ These changes all reflect the notion of adaptivity: the algorithm is adaptive to both the current packing status and the underlying distribution. Also, the decision variables convey a meaning of counts over the remaining horizon, rather than a long-term fraction. After solving the LP, the algorithm normalizes the probability and packs the current item accordingly.

\begin{algorithm}[ht!]
\caption{LP-based Adaptive Algorithm}
\label{alg:nl}
\begin{algorithmic}[1]
\Require Time horizon $T$. 
\Ensure A packing scheme for the $T$ items.
\For{time step $t=1,\dots, T$}
        \State Observe item size $X_t$.
        \State Solve the following adaptive LP and obtain an optimal solution $v^*_t(j,h)$.
\begin{align*}
  \text{OPT}_{t} \coloneqq \min_{\bm{u}}\  & \sum_{j=1}^{B-1} v(j,0)\\
    \text{s.t.} \  & N_{t}(h)+\sum_{j=1}^h v(j,h-j) \ge \sum_{j=1}^{B-h} v(j,h), \text{ \ for any \ } h \in \{1,...,B-1\}\\
    & \sum_{h \in H_t}v(X_t, h) \geq \frac{1}{T}, \text{ \ for \ } H_t \coloneqq \{0\}\cup \{h:N_t(h) > 0\}\\
    & \sum_{h=0}^{B-j} v(j,h) = (T-t+1)\hat{p}_{tj}, \text{ \ for any \ } j\in \{1,...,J\}\\
    & v(j,h)=0, \text{ \ for any \ } h \ge B-j+1 \\
    & v(j,h)\ge 0, \text{ \ for any \ } j,h
\end{align*}
\State Normalize the vector $(v^*_t(X_t, h))_{h\in H_t}$ into a probability distribution. 
\State Pack the newly arrived item $s_t$ according to this distribution.
\EndFor
\end{algorithmic}
\end{algorithm}

Figure \ref{fig1} reproduces the numerical experiments done in \citep{gupta2020interior} and it compares the performance of Algorithm \ref{alg:nl} against the two existing algorithms of \citep{csirik2006sum} and \citep{gupta2020interior}. Specifically, we test the three algorithms over different horizon length $T=10, 20, 50, 100, 200, 500, 1000, 2000$, and for each combination of horizon and distribution, the regret of an algorithm is reported based on the average over $100$ simulation trials. The three distributions used for the numerical experiments are where $p_i$ represents the probability of an item with size $i$:
 
Bounded waste: $B = 9$, $F = (p_2 = \frac{35}{48}, p_3 = \frac{13}{48})$.

Perfectly packable: $B = 10$, $F = (p_1 = p_3 = p_5 = \frac{1}{4}, p_4 = p_8 = \frac{1}{8})$.

Linear waste: $B = 10$, $F = (p_3 = p_4 = p_5 = p_8 = \frac{1}{4})$.\\
We observe a dominating performance of Algorithm \ref{alg:nl} over all three types of distributions. Note that the previous two algorithms are distribution-oblivious, meaning that they do not perform learning/estimation of the distribution. The dominance of Algorithm \ref{alg:nl} can be accounted by the knowledge of $T$ and the adaptivity -- distribution learning done throughout the procedure. 

\begin{figure}[ht!]
    \centering
    \includegraphics[scale=0.35]{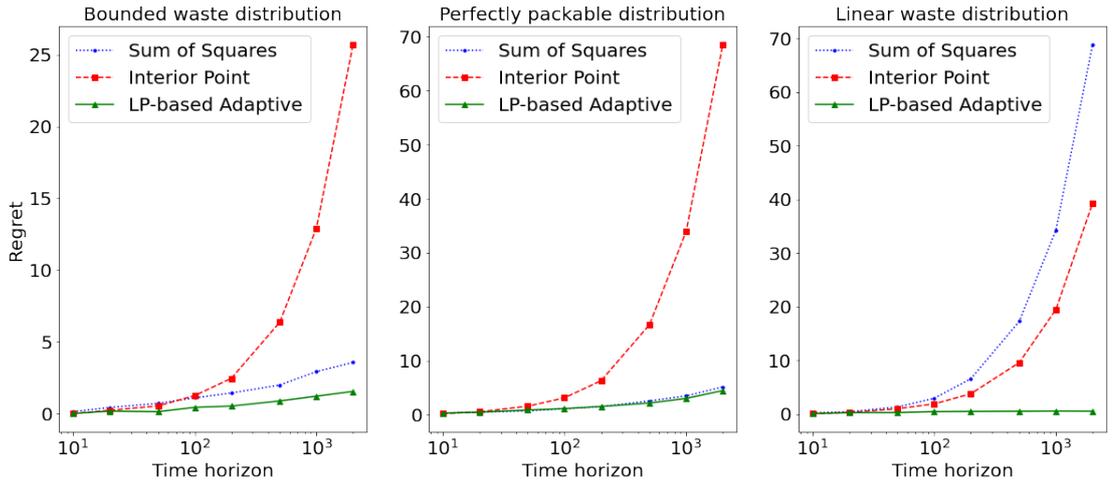}
    \caption{Performance comparison under different distributions}
    \label{fig1}
\end{figure}

Figure \ref{fig2} focuses on Algorithm \ref{alg:nl} and reports its performance if the online procedure is suddenly terminated. The experiment provides some visualization for the temporal dynamic of the algorithm. Specifically, Algorithm  \ref{alg:nl} is implemented with horizon $T=1000$ but terminated at some time $t\le1000$, and then the regret up to the termination time $t$ is reported. Different distributions with different parameters are tested. It is intuitive that when the termination time is close to $T$, the regret is small; on the opposite, if the algorithm is terminated halfway, the regret becomes larger. If we initially inform the algorithm that horizon $T=1000$, it may take aggressive decisions in opening many bins during the halfway (such as $t=500$) for the sake of some long-term good packing scheme, and this will result in a larger regret if the algorithm is suddenly terminated. However, the regret of Algorithm \ref{alg:nl} is relatively small, even when suddenly terminated; this can be seen from a comparison between the left panel of Figure \ref{fig2} and the performance of the other two algorithms in Figure \ref{fig1}.

\begin{figure}[ht!]
    \centering
    \includegraphics[scale=0.35]{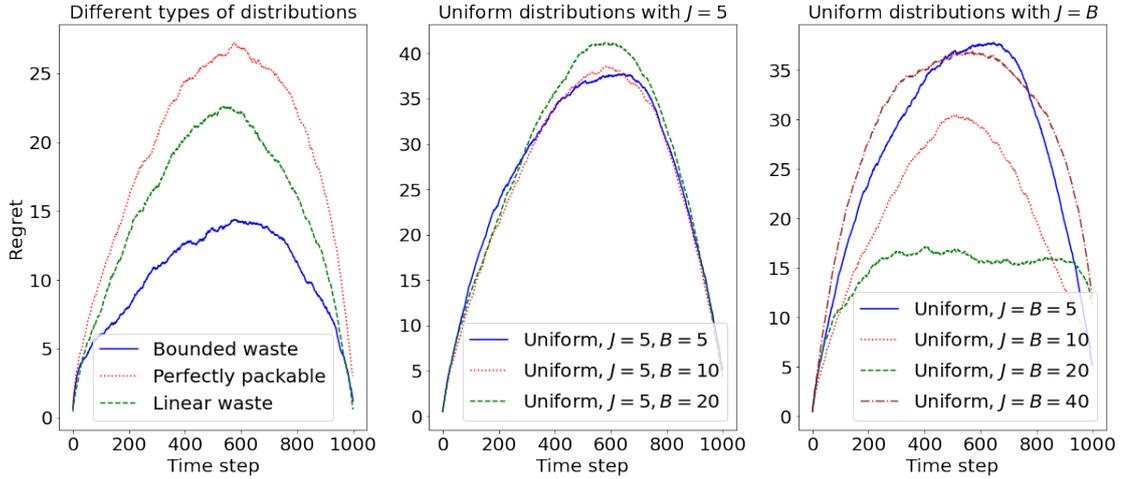}
    \caption{Temporal performance of Algorithm \ref{alg:nl} under different uniform distributions}
    \label{fig2}
\end{figure}

Figure \ref{fig3} explores the dependency of the regret of Algorithm \ref{alg:nl} on $J$ and $B$. We can see that the regret grows mildly with the number of item categories $J$ but seems to bear no dependency on $B$. This indicates that the $O(B\sqrt{T})$ bound in \citep{csirik2006sum, gupta2020interior} may be improvable to $O(\sqrt{JT})$. Moreover, the mild growth rate with respect to $J$ may point to a possibility to adapt Algorithm \ref{alg:nl} for a real-valued item size distribution through some discretization.

\begin{figure}[ht!]
    \centering
    \includegraphics[scale=0.35]{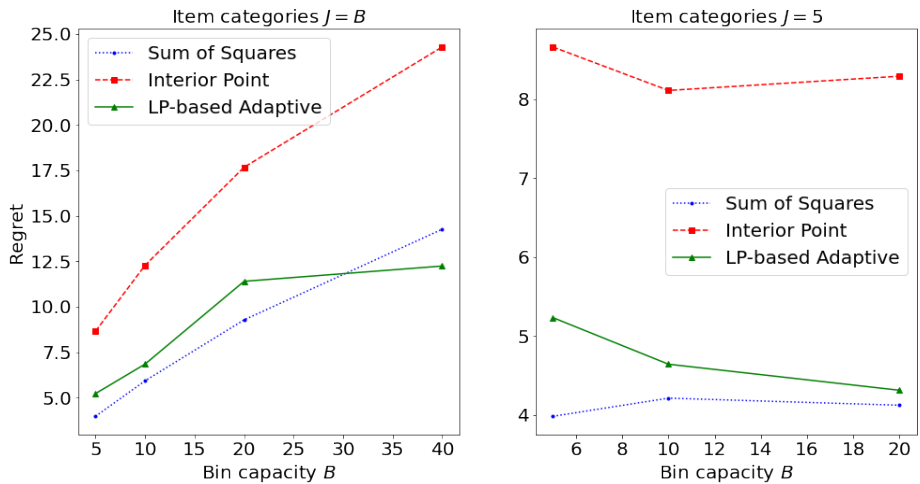}
    \caption{Performance of Algorithm \ref{alg:nl} with respect to different bin capacity $B$ and number of item category $J$. Horizon $T=1000.$}
    \label{fig3}
\end{figure}

\bibliographystyle{informs2014} 
\bibliography{main.bib}

\end{document}